
\documentclass[12pt, draftclsnofoot, onecolumn]{IEEEtran}

\ifCLASSINFOpdf
% \usepackage[pdftex]{graphicx}
% declare the path(s) where your graphic files are
% \graphicspath{{../pdf/}{../jpeg/}}
% and their extensions so you won't have to specify these with
% every instance of \includegraphics
% \DeclareGraphicsExtensions{.pdf,.jpeg,.png}
\else
% or other class option (dvipsone, dvipdf, if not using dvips). graphicx
% will default to the driver specified in the system graphics.cfg if no
% driver is specified.
% \usepackage[dvips]{graphicx}
% declare the path(s) where your graphic files are
% \graphicspath{{../eps/}}
% and their extensions so you won't have to specify these with
% every instance of \includegraphics
% \DeclareGraphicsExtensions{.eps}
\fi
\usepackage{amsmath,epsfig,comment}
\usepackage{amsthm}
\usepackage{tcolorbox}
\usepackage{amssymb}
\usepackage{mathrsfs}
\usepackage{amsmath}
\usepackage{tikz}
\usetikzlibrary{arrows,decorations.markings,automata}
%\nofiles
\usepackage{hyperref}
\hypersetup{
	colorlinks=true,
	linkcolor=black,
	filecolor=magenta,
	urlcolor=cyan,
	citecolor=blue
}

\usepackage{cases} %onecolumn,
\usepackage{amssymb,amsmath,cite}
\usepackage{epsfig}
\usepackage{color}
\usepackage{bm}
\usepackage[ruled]{algorithm2e}
\usepackage{graphicx}
\usepackage{subfigure}
\usepackage{algorithmic}
\usepackage{multirow}
\usepackage{soul}
\usepackage{extarrows}% http://ctan.org/pkg/extarrows

\newtheorem{theorem}{Theorem}

\newtheorem{example}{Example}
\newtheorem{assumption}{Assumption}

\newtheorem{lemma}{Lemma}

\usepackage{float}%提供float浮动环境
\usepackage{booktabs}%提供命令\toprule、\midrule、\bottomrule
\usepackage{multirow}%提供跨行命令\multirow{}{}{}
\usepackage{soul}
\def\blue{\color{black}}

\DeclareMathAlphabet\mathbfcal{OMS}{cmsy}{b}{n}

\newcommand{\mat}[1]{\mathbf{#1}}
\def\s{\mathbf{s}}

\def\I{\mathcal{I}}
\def\RR{\mathcal{R}}
\def\A{\mathbf{A}}

\def\n{\mathbf{n}}
\def\d{\mathbf{d}}

\def\I{\mathcal{I}}
\def\R{\mathbb{R}}
\def\C{\mathbb{C}}
\def\H{\mathbf{H}}

\def\Re{\mathcal{R}}
\def\Im{\mathcal{I}}
\def\y{\boldsymbol{y}}

\def\z{\mathbf{z}}
\def\x{\boldsymbol{x}}

\def\h{\mathbf{h}}
\def\u{\mathbf{u}}
\def\v{\mathbf{v}}

\def\a{\boldsymbol{\Lambda}}
\hyphenation{op-tical net-works semi-conduc-tor}
\newtheoremstyle{noparens}%
{}{}%
{\itshape}{}%
{\bfseries}{.}%
{ }%
{\thmname{#1}\thmnumber{ #2}\mdseries\thmnote{ #3}}

\theoremstyle{noparens}

\begin{document}
%
% paper title
% Titles are generally capitalized except for words such as a, an, and, as,
% at, but, by, for, in, nor, of, on, or, the, to and up, which are usually
% not capitalized unless they are the first or last word of the title.
% Linebreaks \\ can be used within to get better formatting as desired.
% Do not put math or special symbols in the title.
\title{Efficient CI-Based One-Bit Precoding for Multiuser Downlink Massive MIMO Systems with PSK Modulation}
%
%
% author names and IEEE memberships
% note positions of commas and nonbreaking spaces ( ~ ) LaTeX will not break
% a structure at a ~ so this keeps an author's name from being broken across
% two lines.
% use \thanks{} to gain access to the first footnote area
% a separate \thanks must be used for each paragraph as LaTeX2e's \thanks
% was not built to handle multiple paragraphs
%

%\author{Michael~Shell,~\IEEEmembership{Member,~IEEE,}
%    John~Doe,~\IEEEmembership{Fellow,~OSA,}
  %  and~Jane~Doe,~\IEEEmembership{Life~Fellow,~IEEE}}% <-this % stops a space
  \author{\IEEEauthorblockN{Zheyu Wu, Bo Jiang, Ya-Feng Liu, Mingjie Shao, and Yu-Hong Dai}% <-this % stops a space
	%\thanks{The work of B. Jiang was supported in part by??
	%		The works of Y.-F. Liu and Y.-H. Dai were supported in part by the National Natural Science Foundation of China (NSFC) under Grant 12022116, Grant 12021001, Grant 11688101, and Grant 11991021, Grant 11631013, and the Strategic Priority Research Program of Chinese Academy of Sciences, Grant XDA27000000.
	%			(\emph{Corresponding author: Ya-Feng Liu.})}
	\thanks{
	 Part of this work has been presented at the IEEE International Conference on Acoustics, Speech, and Signal Processing (ICASSP)  2022 \cite{conference}.
	
	     		Z. Wu, Y.-F. Liu, {and Y.-H. Dai} are with the State Key Laboratory of Scientific and Engineering Computing, Institute of Computational Mathematics and Scientific/Engineering Computing, Academy of Mathematics and Systems Science, Chinese Academy of Sciences, Beijing 100190, China (e-mail: \{wuzy, yafliu, dyh\}@lsec.cc.ac.cn). B. Jiang is with the Ministry of Education Key Laboratory for NSLSCS, School of Mathematical Sciences, Nanjing Normal University, Nanjing 210023, China (e-mail: jiangbo@njnu.edu.cn). {\color{black}M. Shao is with the School of Information Science and Engineering, Shandong University, Qingdao, China (email: mingjieshao@sdu.edu.cn)}
			}
}

\maketitle
% As a general rule, do not put math, special symbols or citations
% in the abstract or keywords.
\begin{abstract}
In this paper, we consider the one-bit precoding problem for the multiuser downlink massive multiple-input multiple-output (MIMO) system with phase shift keying (PSK) modulation. {\color{black}We} focus on the celebrated constructive interference (CI)-based problem formulation. We first establish the NP-hardness of the problem (even in the single-user case), which reveals the intrinsic difficulty of globally solving the problem. Then, we propose a novel negative $\ell_1$ penalty  model for the considered problem, which penalizes the one-bit constraint into the objective {\color{black}by} a negative $\ell_1$-norm term, and show the equivalence between (global and local) solutions of the original problem and the penalty problem when the penalty parameter is sufficiently large.  We further transform the penalty model into an equivalent min-max problem and propose an efficient alternating proximal/projection gradient descent ascent (APGDA) algorithm for solving it, which performs a proximal gradient decent over one block of variables and a projection gradient ascent over the other block of variables alternately.  The APGDA algorithm enjoys a low per-iteration complexity and is guaranteed to converge to a stationary point of the  min-max problem and a local minimizer of the penalty problem. To further reduce the computational cost, we also propose a low-complexity implementation of the APGDA algorithm, where the values of the variables will be fixed in later iterations once they satisfy the one-bit constraint. Numerical results show that, compared {\color{black}to} the state-of-the-art CI-based algorithms, both of the  proposed algorithms generally achieve better bit-error-rate (BER) performance with lower computational cost.

\end{abstract}

% Note that keywords are not normally used for peerreview papers.
\begin{IEEEkeywords}
\vspace{-0.2cm}
Constructive interference, massive MIMO, min-max problem, negative $\ell_1$ penalty, one-bit precoding.
\end{IEEEkeywords}

% For peer review papers, you can put extra information on the cover
% page as needed:
% \ifCLASSOPTIONpeerreview
% \begin{center} \bfseries EDICS Category: 3-BBND \end{center}
% \fi
%
% For peerreview papers, this IEEEtran command inserts a page break and
% creates the second title. It will be ignored for other modes.
\IEEEpeerreviewmaketitle

\section{Introduction}\label{sec:introduction}
\IEEEPARstart{M}{assive} multiple-input multiple-output (MIMO), which {\blue may} deploy hundreds of antennas at the base station (BS),  is a key and effective technology for significantly improving the {\color{black}spectral} and energy efficiency of 5G and beyond wireless communication systems \cite{massivemimo2,massivemimo1,massivemimo3}.
{\blue However, since the  numbers of radio-frequency (RF) chains and analog-to-digital converters (ADCs)/digital-to-analog converters (DACs) scale up by the number of antennas, massive MIMO systems face  great challenges of high hardware complexity and power consumption.
To alleviate the above issues, researchers have exploited two research directions: cutting down the number of RF chains and ADCs/DACs that drive the large-antenna array, and reducing the resolution/quality of each ADC/DAC and RF chain.
In the downlink, the former paradigm leads to hybrid analog digital (AD) precoding \cite{AD1,AD2,AD3}, while the latter corresponds to precoding under low-resolution DACs \cite{SQUID}.
In particular, one-bit precoding, where the BS is employed with the cheapest and most power-efficient one-bit DACs, has attracted a lot of recent research interests \cite{ADC4,ADC5,linear2,linear1,SQUID,C3PO2,TITpaper,MAGIQ,IDE,ADMM,CIfirst,QCECI,CImodel,PBB,GMSM,CIBB,CI1bitoverview,sep2,sep3,GEMM,seppsk}.
On the one hand, the one-bit signal has low peak-to-average power ratio (PAPR), which is also favorable to save energy for the accompanying power amplifies (PAs) in the RF chains.
On the other hand, the use of one-bit DACs imposes stringent discrete signal constraints on the transmitted signal, which renders one-bit precoding a challenging problem to solve.
In this paper, we {\color{black}focus} on the one-bit precoding problem in the multiuser downlink massive MIMO system with phase shift keying (PSK) modulation.
We contribute to the fundamental computational complexity  analysis and   efficient algorithm designs for   the one-bit precoding problem.
}

\subsection{Related Works}\label{relatedworks}

Early works on downlink transmission with one-bit DACs have mainly focused on {\color{black}the analysis and design} of linear-quantized precoding schemes, in which the precoders are obtained by simply quantizing the classical linear precoders \cite{SQUID}\cite{linear2,linear1,TITpaper}. {\color{black}In particular, the authors in \cite{TITpaper} have given a unified performance analysis for a wide class of linear-quantized precoding schemes, including quantized matched-filter (MF) and zero-forcing (ZF) precoding as special cases. They have also identified the optimal linear-quantized precoder within the considered class.}
 Despite the advantage of their low computational complexities, {\color{black} the linear-quantized} precoders usually suffer from severe bit error rate (BER) floors, especially in the high signal-to-noise ratio (SNR) regime.

{\blue
The nonlinear precoding schemes usually take symbol-level metrics for optimization, among which  minimum mean square error (MMSE) \cite{SQUID}\cite{C3PO2,MAGIQ,IDE,ADMM}, constructive inference (CI) \cite{CIfirst,QCECI,CImodel,PBB,GMSM,CIBB,CI1bitoverview}, and symbol error probability (SEP) \cite{sep2,sep3,GEMM,seppsk} are probably the most widely considered performance metrics.
MMSE measures the average distance between the received signal and its corresponding constellation symbol, which is a general metric that can be applied to both QAM and PSK constellations.
Several one-bit precoders that optimize the MMSE  have been proposed in \cite{SQUID}, including the semidefinite relaxation (SDR) precoder and the more computationally efficient squared-infinity norm Douglas-Rachford splitting (SQUID) precoder. To further reduce the computational cost of SQUID, {\color{black}the authors in \cite{C3PO2} proposed two other precoders, called C1PO and C2PO, based on the biconvex relaxation technique.
In \cite{MAGIQ}, the authors proposed a greedy iterative precoder named MAGIQ, which exhibits  slightly better  performance compared to that of the SQUID, C1PO, and C2PO algorithms. 
Another precoder based on a modified alternating direction method of multiplier (ADMM)  framework, referred to as the iterative discrete estimation (IDE), was introduced in \cite{IDE}.  In the same work, the authors also proposed an efficient low-complexity implementation of IDE, named IDE2, which achieves similar performance to IDE but with a significantly reduced computational cost. }%Another precoder based on a modified ADMM framework called iterative discrete estimation (IDE) was introduced in \cite{IDE}, in which authors also proposed  a low-complexity implementation named IDE2 that  achieves similar performance as IDE with significantly reduced computational cost.  }
More recently, one-bit precoding with SEP minimization has received increasing attentions \cite{sep2,sep3,GEMM,seppsk}.
The crux lies in that the SEP expression for PSK constellation involves integral and does not admit a closed form \cite{digitalcommunication}, which greatly increases the difficulty of precoding algorithm designs.
}

{\blue
For PSK constellation, the most widely-adopted design metric is CI. The concept of CI exploits the sectorial decision region property for PSK constellation.
Intuitively speaking, it tries to alter the multiuser interference to be constructive to push the noise-free received signal deep into the decision region and away from the decision boundary.
{\color{black}Compared to the MMSE metric, the CI metric exploits the beneficial interference inherent in multiuser transmission, thereby more effectively leveraging the advantages of symbol-level precoding \cite{CI1bitoverview}\cite{CItutorial}}; compared to the SEP {\color{black}metric}, the CI metric is much easier to optimize.
In fact, the CI criterion is closely related to the SEP criterion. In particular, it has been shown in \cite{sep3} and \cite{diversity} that for PSK constellation, maximizing the CI effect can be seen as minimizing an approximation of the SEP.
The concept of CI has been well {\color{black}studied for symbol-level precoding without one-bit constraint \cite{CItutorial}\cite{CI1,CI2,CI3}; more recently, it has been exploited for one-bit precoding \cite{CIfirst,QCECI,CImodel,PBB,GMSM,CIBB,CI1bitoverview}.
}

{\color{black}To the best of our knowledge, the first to incorporate the idea of CI into one-bit precoding design is \cite{CIfirst}\cite{QCECI}. Subsequently, the authors in \cite{CImodel} proposed an alternative CI-based model, known as the  symbol scaling model, which admits a simpler formulation and has been shown in  \cite{CIequivalent} to be equivalent to the model in \cite{CIfirst} \cite{QCECI}.}  {\color{black}By far, various} algorithms have  been proposed for solving the CI-based {\blue one-bit precoding problem}, most of which are based on the linear programming (LP) relaxation model \cite{CIfirst,QCECI, CImodel, PBB,GMSM}. 
Generally speaking, this kind of approach consists of two stages: in the first stage, the LP relaxation model is solved; in the second stage, some optimization or greedy techniques are utilized to determine the values of elements of the LP relaxation solution that do not satisfy the one-bit constraint. The theoretical support of this kind of approach is that most entries of  the solution of the LP relaxation already satisfy the one-bit constraint, and thus in the second stage only a subproblem with a reduced dimension needs to be considered \cite{PBB}. Different techniques in the second stage lead to different algorithms.  Specifically, the {\color{black}maximum safety margin} (MSM) algorithm \cite{CIfirst,QCECI,CImodel} directly quantizes the LP relaxation  solution to satisfy the one-bit constraint\footnote{{\color{black}The term ``MSM'' was introduced in \cite{CIfirst}  \cite{QCECI} to refer to the specific algorithm. In \cite{CImodel}, the corresponding algorithm is called ``constructive'' in its simulations. However, to maintain simplicity and acknowledge the equivalence of the algorithms presented in \cite{CIfirst, QCECI, CImodel}, we uniformly refer to them as ``MSM"  throughout the paper.}}, which is the most straightforward way to obtain a one-bit solution from the solution of the LP relaxation.  However, its BER performance is often unsatisfactory. The {\color{black}partial branch-and-bound} (P-BB) algorithm \cite{PBB} solves the subproblem in the second stage to global optimality with an elaborately designed branch-and-bound procedure, which  achieves the best performance of this kind of approaches but is unsuitable for practical implementation due to its high computational complexity. {\color{black} To make a balance between complexity and performance,  some algorithms employ the greedy technique in the second stage to determine the values of the non-one-bit elements of the LP relaxation solution in some custom-designed sequential manners. Examples of such algorithms include the ordered partial sequential update (OPSU) algorithm \cite{PBB} and the greedy MSM (GMSM) algorithm \cite{GMSM}. %further includes a refinement stage to the quantized solution obtained by the MSM algorithm, which checks the quantized elements one after another  whether changing their signs can improve the objective value. This simple technique can greatly enhance the performance of the MSM algorithm and is much more computationally efficient than the P-BB algorithm. Nevertheless, it suffers from a performance degradation in difficult cases (e.g., high user-antenna ratio or high-order modulation). 
These greedy approaches can greatly enhance the performance of the MSM algorithm {\color{black}with slightly increased} computational cost, making them considerably more efficient than the P-BB algorithm.

 }
{\color{black} We summarize the existing models and/or algorithms for one-bit precoding under PSK constellations in Table \ref{Compared algorithms}. Few methods can handle difficult cases (e.g., large number of users, high-order PSK signals) and maintain a low computational complexity at the same time. %In summary, either the performance of current CI-based approaches  degrades in difficult cases (e.g., MSM, GMSM and OPSU) or their computational costs are prohibitively high (e.g., P-BB).
%Please see Table \ref{Compared algorithms} for a summary of models and/or algorithms for one-bit precoding design.
}

\begin{table}
\caption{A summary of models and algorithms for one-bit precoding with PSK modulation.}
\label{Compared algorithms}
\centering
\fontsize{8}{8}\selectfont
\begin{tabular}{|c|c|c|c|c|}%四个c代表该表一共四列，内容全部居中
\hline
Algorithm& Design principle &Optimization model and/or technique&Complexity&Error rate\\%表格宽度参数采用*代表自动宽度
\hline%第二道横线
%\multirow{2}{*}{ZF Inf-Bit}&The unquantized ZF precoder with \\
%& infinite-precision DACs\\
%\midrule%第二道横线
%\midrule%第三道横线
\multirow{2}{*}{ZF \cite{SQUID}}&\multirow{2}{*}{Block-level ZF}&\multirow{4}{*}{Direct quantization}&\multirow{4}{*}{Low}&\multirow{4}{*}{Poor}\\
&&&&\\
%\midrule%第二道横线
\cline{1-1}\cline{2-2}
\multirow{2}{*}{WF\cite{SQUID}}&\multirow{2}{*}{Block-level MMSE}&&&\\
&&&&\\
\hline%第二道横线
\multirow{2}{*}{C1PO \cite{C3PO2}}&&{Biconvex relaxation and}&\multirow{10}{*}{Moderate}&\\
&\multirow{4}{*}{Symbol-level}&alternating minimization&&\multirow{8}{*}{{\color{black}Satisfactory} in easy cases}\\
\cline{1-1} \cline{3-3}%第二道横线
\multirow{2}{*}{C2PO \cite{C3PO2}}&&{Biconvex relaxation and}&&\\
&& forward-backward splitting&&\\
\cline{1-1}\cline{3-3}%第二道横线
\multirow{2}{*}{SQUID \cite{SQUID}}&\multirow{3}{*} {MMSE} &\multirow{2}{*}{Douglas-Rachford splitting}&&\\
&&&&\\
\cline{1-1}\cline{3-3}
\multirow{2}{*}{{\color{black}MAGIQ \cite{MAGIQ}}}&&\multirow{2}{*}{{\color{black}Greedy technique}}&&{but failed in difficult cases}\\
&&&&\\
\cline{1-1}\cline{3-3}
\multirow{2}{*}{{\color{black}IDE2 \cite{IDE}}}&&\multirow{2}{*}{{\color{black}ADMM}}&&\\
&&&&\\
\cline{1-1}\cline{2-2}\cline{3-3}\cline{4-4}
\multirow{2}{*}{MSM\cite{CIfirst,QCECI,CImodel}}&&\multirow{2}{*}{LP relaxation and direct quantization}&&\\
&&&Moderate, {\color{black} generally} higher&\\
\cline{1-1}\cline{3-3}\cline{5-5}%第二道横线
\multirow{2}{*}{OPSU\cite{PBB}, GMSM\cite{GMSM}}&&\multirow{2}{*}{LP relaxation and greedy search}&than MMSE methods&{{\color{black}Good} in general but}\\
&\multirow{2}{*}{Symbol-level}&&&degraded in difficult cases\\
\cline{1-1}\cline{3-3}\cline{4-4}\cline{5-5}
\multirow{2}{*}{P-BB\cite{PBB}}&&\multirow{2}{*}{LP relaxation and branch-and-bound}&{Prohibitively high}&\multirow{6}{*}{{\color{black}Good}}\\
&&&in large systems&\\
\cline{1-1}\cline{3-3}\cline{4-4}
\multirow{2}{*}{NL1P (this paper)}&{CI}&\multirow{2}{*}{Negative $\ell_1$ penalty reformulation }&&\\
&&&Moderate, {\color{black}generally} higher &\\
\cline{1-1}
\multirow{2}{*}{ANL1P (this paper)}&&and min-max optimization& than MMSE methods&\\
&&&&\\
\hline
%第四道横线
\end{tabular}
\end{table}

\subsection{Our Contributions}
{This paper considers the one-bit precoding design problem for massive MIMO systems with PSK modulation and focuses on the CI-based symbol scaling model \cite{CImodel}.
 The main contribution of this paper is  an efficient negative $\ell_1$ penalty (NL1P) approach for solving large-scale {\color{black} one-bit precoding} problems arising from the massive MIMO scenario. Two key features of the proposed approach are as follows. First, our approach is based on a novel penalty model, which is shown to be equivalent to the original problem both globally and locally when the penalty parameter is sufficiently large. This is in sharp contrast to the LP relaxation model considered in the previous works \cite{CIfirst,QCECI,CImodel,PBB,GMSM}, {\color{black} whose optimal solution is definitely different from that of the original problem}. Second, the dominant cost of the proposed approach at each iteration is two matrix-vector multiplications and one projection onto the simplex, which makes it particularly suitable for solving large-scale one-bit precoding problems {\color{black}in the massive MIMO systems.}

We summarize the contributions of the paper as follows.
\begin{enumerate}
\item \emph{Complexity Analysis:} We characterize the complexity status of the considered one-bit precoding problem. Specifically, we show that the considered problem is NP-hard even in the single-user case and strongly NP-hard in the general case. The complexity results fill a theoretical gap, as the complexity status of the problem remains unknown (in spite of the existence of various heuristic approaches for solving the problem).
\item \emph{Novel Penalty Model}: We propose a novel negative $\ell_1$ penalty model for the considered problem, in which the one-bit constraint is penalized into the objective with a negative $\ell_1$-norm term.  We show that when the penalty parameter is sufficiently large, the penalty model is  an exact reformulation of the original problem, in the sense that the two problems share the same global and local solutions. {\color{black}The proposed penalty model is the first continuous reformulation of the original discrete model and is more favorable for the algorithmic design (compared to the discrete model). }

\item \emph{Efficient Algorithms}: To solve the penalty model, we further transform it into an equivalent min-max problem. We propose an efficient alternating proximal/projection gradient descent ascent (APGDA) algorithm for solving a class of non-smooth nonconvex-concave min-max problems (which includes our problem as a special case) and prove its convergence.
    {\blue More specifically}, we show that the APGDA algorithm is guaranteed to converge to a stationary point of the min-max problem and a local minimizer of the penalty problem.  We also propose a low-complexity implementation of the proposed APGDA algorithm when applied to solve our interested penalty problem. Simulation results show that both the proposed algorithm and its low-complexity implementation  generally outperform the state-of-the-art CI-based algorithms in terms of both the BER performance and the computational efficiency.

    \end{enumerate}

{\color{black}Compared to its conference version \cite{conference}, this paper has made significant progress in theory, algorithm, and  numerical simulation. 
First, this paper gives a  complexity analysis of the one-bit precoding problem, which was not covered in \cite{conference}.
 Second, the APGDA algorithm proposed in this paper is a substantial generalization of the alternating optimization (AO) algorithm proposed in  \cite{conference}. While the AO algorithm {\color{black}in \cite{conference}} was limited to a specific problem, our generalization makes the APGDA algorithm applicable to a broader range of potential applications \cite{ICASSPQCE}.  Moreover, this paper introduces a novel low-complexity implementation of the APGDA algorithm applied to the one-bit precoding problem and obtains an accelerated NL1P approach, which can achieve comparable performance to the NL1P approach in \cite{conference} {\color{black}with reduced} computational cost.
  Lastly, this paper provides more comprehensive numerical results to demonstrate the superiority of our proposed approaches. }

\vspace{-0.2cm}
\subsection{Organization and Notations}\label{notions}
The remaining parts of the paper are organized as follows. Section \ref{background} introduces the system model and the CI-based symbol scaling model for one-bit precoding design. Section \ref{complexity} establishes the complexity status of the considered problem.  A framework of the proposed negative $\ell_1$ penalty approach is given in Section \ref{penaltymodel}, after which an efficient algorithm for solving the penalty model is developed in Section \ref{algorithm}. Simulation results are shown in Section \ref{simulation} and the paper is concluded in Section VII.

Throughout this paper, we use $\R$ and $\C$ to represent the real and complex space, respectively. We use $x$, $\x$, $\mat{X}$, and $\mathcal{X}$ to denote scalar, column vector, matrix, and set, respectively. The symbols $\mathbf{0}$ and $\mathbf{1}$ are column vectors  with all elements being $0$ and $1$, respectively. For a vector $\x$, $\x(i)$ refers to its $i$-th entry, where $x_i$ is also used if it does not cause any ambiguity; $\x\geq\mathbf{0}~(\x>\mathbf{0})$ means that each element of $\x$ is nonnegative (positive). For a matrix $\mat{X}$, $X_{ij}$ refers to its $(i,j)$-th element; $\text{mean}(\mat{X})$ returns the {\color{black}average value} of $\{X_{ij}\}$. For a set $\mathcal{X}$,  $\text{Proj}_\mathcal{X}$ is the projection operator onto set $\mathcal{X}.$  $\text{sgn}(\cdot)$ represents the sign of a real number, which returns $1$ if the number is nonnegative and returns $-1$ otherwise. $\|\cdot\|_p$ denotes the $\ell_p$ norm of the corresponding matrix or vector, where $p\in\{1,2,\infty\}$. $(\cdot)^\mathsf{T}$, $\RR(\cdot)$, $\I(\cdot)$, and $|\cdot|$ return the transpose, the real part, the imaginary part, and the modular of their corresponding argument, respectively. The subdifferential of a convex function $f$ is denoted by $\partial f(\cdot)$.  %$\text{dom}(f)$ refers to the domain of the function $f$, i.e., $\text{dom} (f)= \{\x\mid f(\x)<+\infty\}$.~
$\mathcal{C}\mathcal{N}(\mathbf{0},\sigma^2\mathbf{I})$ represents the zero-mean circularly symmetric complex Gaussian distribution with covariance matrix $\sigma^2\mathbf{I}$, where $\mathbf{I}$ denotes the identity matrix. $B(\x_0,d)$ refers to the ball centered at $\x_0$ with radius $d$, i.e., $B(\x_0,d)=\left\{\x\mid\|\x-\x_0\|_2\leq d\right\}$. $\mathbb{P}(\cdot)$ denotes the probability of the corresponding event. Finally, $j\triangleq\sqrt{-1}$ is the imaginary unit.

\section{Problem Formulation}\label{background}
In this section, we present the problem formulation, including the system model and the CI-based symbol scaling model for one-bit precoding design.
\subsection{System Model}
%\begin{figure}
%\centering
%\includegraphics[scale=0.36]{one_bit_precoding.png}
%\vspace{-0.2cm}
%\caption{A multiuser downlink massive MIMO system with one-bit precoding.}
%\label{fig1}
%\end{figure}
We consider a {\blue standard flat-fading} downlink multiuser massive MIMO system, in which a BS equipped with $N_t$ antennas serves $K$ single-antenna users simultaneously, where $K\ll N_t$.
{\blue The downlink transmission can be modeled by
$$\y=\H \x_T+\n,$$
where $\x_T\in \C^{N_t}$ is the transmitted signal at the BS, $\H=[\h_1,\dots,\h_K]^\mathsf{T}\in \C^{K\times N_t}$ is the downlink channel from the BS to the users, $\n\sim\mathcal{C}\mathcal{N}(\mathbf{0},\sigma^2\mathbf{I})$ is the additive complex Gaussian noise, and  $\y\in\C^{K\times 1}$  is the received signal vector with $y_k$ representing the received signal of user $k$.
}

%We assume that one-bit DACs are employed at the BS and ideal ADCs with infinite precision are employed at each receiver side.
{\blue The BS is equipped with one-bit DACs, which enforces that each entry of the transmitted signal $\boldsymbol{x}_T$  can only be chosen from four values. }
Specifically, $\x_T\in\mathcal{X}^{N_t}$, where
$\mathcal{X}=\left\{\pm\frac{1}{\sqrt{2N_t}}\pm\frac{1}{\sqrt{2N_t}}j\right\}.$ Here we normalize $\x_T$ {\blue to unit transmission power}{\color{black}, i.e.,} $\|\x_T\|_2^2=1${\color{black},} for simplicity.
Let $\s=[s_1,\dots,s_K]^\mathsf{T}$ be the intended data symbol vector for the users whose entries are drawn from a unit-norm $M$-PSK constellation, {\blue i.e., $s_k\in\mathcal{S}_M:=\{e^{j\frac{2\pi m}{M}}\mid m=0,\dots,M-1\}$.}
At the receiver side, {\blue  each user  detects the symbol $\hat{s}_k $ from the received signal $y_k$ via nearest-neighbor decoding, i.e., $\hat{s}_k  = \arg\min_{s\in \mathcal{S}_M}\| y_k - s \|_2$.}

{\blue Assuming that the downlink channel $\H$ is known at the BS \cite{linear1,linear2,C3PO2,MAGIQ,TITpaper,IDE,ADMM,CIfirst,QCECI,CImodel,GMSM,PBB,CIBB,CI1bitoverview,GEMM,sep2,sep3}, our goal is to design the transmitted signal $\x_T$ such that the SEP,   defined as  $\mathbb{P}(\s\neq\hat{\s})$, is as low as possible. }
In this paper, we focus on the CI formulation {\blue for the one-bit precoding} problem \cite{PBB,CImodel,CIBB,GMSM,CIfirst,QCECI,CI1bitoverview}.
{\blue CI is a symbol-level precoding scheme, where the transmitted signal $\x_T$ is optimized for each realization of the pair of channel and symbol $(\H,\s)$.}

\begin{comment}
\subsection{Constructive Interference (CI)}
Constructive-interference, or CI, refers to interference that pushes the received signal away from all of their corresponding decision boundaries of the modulated-symbol constellation, which thus contributes to the useful signal power\cite{CItutorial}. To illustrate this concept more intuitively, in Fig. \ref{fig2} we depict a piece of the decision region for $8$-PSK modulation, where the blue dot is the intended data symbol for some user.
As shown in the figure, if the noise-free received signal is located in the constructive region, the distance to the decision boundary becomes larger, which will lead to a lower possibility of incorrect decoding.  The CI metric, which is the focus of this paper, aims to push the signal deeper into the constructive region, or farther away from the decision boundaries.
% whereas the MSE-based approaches
The MSE metric, on the other hand, seeks to suppress all interference so that the received signal is as close to the intended data symbol as possible. Obviously, the MSE metric fails to exploit the useful power of interference and is therefore inferior to the CI metric in the symbol-level precoding schemes. Additionally, it is worth noting that for PSK modulation, maximizing the CI effect can be seen as minimizing an upper bound approximation of the symbol error probability\cite{sep3}, which also reveals the significance of the CI metric.
\begin{figure}
\centering
\includegraphics[scale=0.35]{com_mse_ci.png}
\caption{Comparison of CI metric and MSE metric.}
\label{fig2}
\end{figure}
\end{comment}
\subsection{CI-Based Symbol Scaling Model for One-Bit Precoding}
CI refers to {\color{black} the} interference that pushes the received signal away from {\blue  their corresponding decision boundaries} of the modulated-symbol constellation, which thus contributes to the useful signal power \cite{CItutorial}. {\color{black}In the context of symbol-level precoding design, the CI metric exploits the idea of CI, aiming to maximize the distance between the noise-free received signal and the decision boundary of the desired symbol. By doing so, the true received signal (with the additive Gaussian noise added) is less likely to fall outside the decision region, thus achieving a lower SEP. In comparison, the MMSE metric seeks to suppress all the interference so that the noise-free received signal is as close to the intended symbol as possible. Consequently, the MMSE metric fails to exploit the finite alphabet nature of the constellation symbols and the special shape of their decision regions, {\color{black}and is generally inferior to the CI metric in terms of minimizing the system SEP, which is an ideal performance metric in the context of symbol-level precoding.}
 %Consequently, the MMSE metric fails to exploit the finite alphabet nature of the constellation symbols and the special shape of their decision regions, and is generally suboptimal in the context of symbol-level precoding. 
Please see \cite{CI1bitoverview} and \cite{CItutorial}  for more detailed discussions on CI and its comparison with the MMSE metric. 
In this paper, we adopt the CI metric as the performance metric for one-bit precoding design.}  In this subsection, we  introduce the mathematical formulation of the CI metric and the corresponding symbol scaling model proposed in \cite{CImodel}.

\begin{figure}
\centering
\includegraphics[scale=0.33]{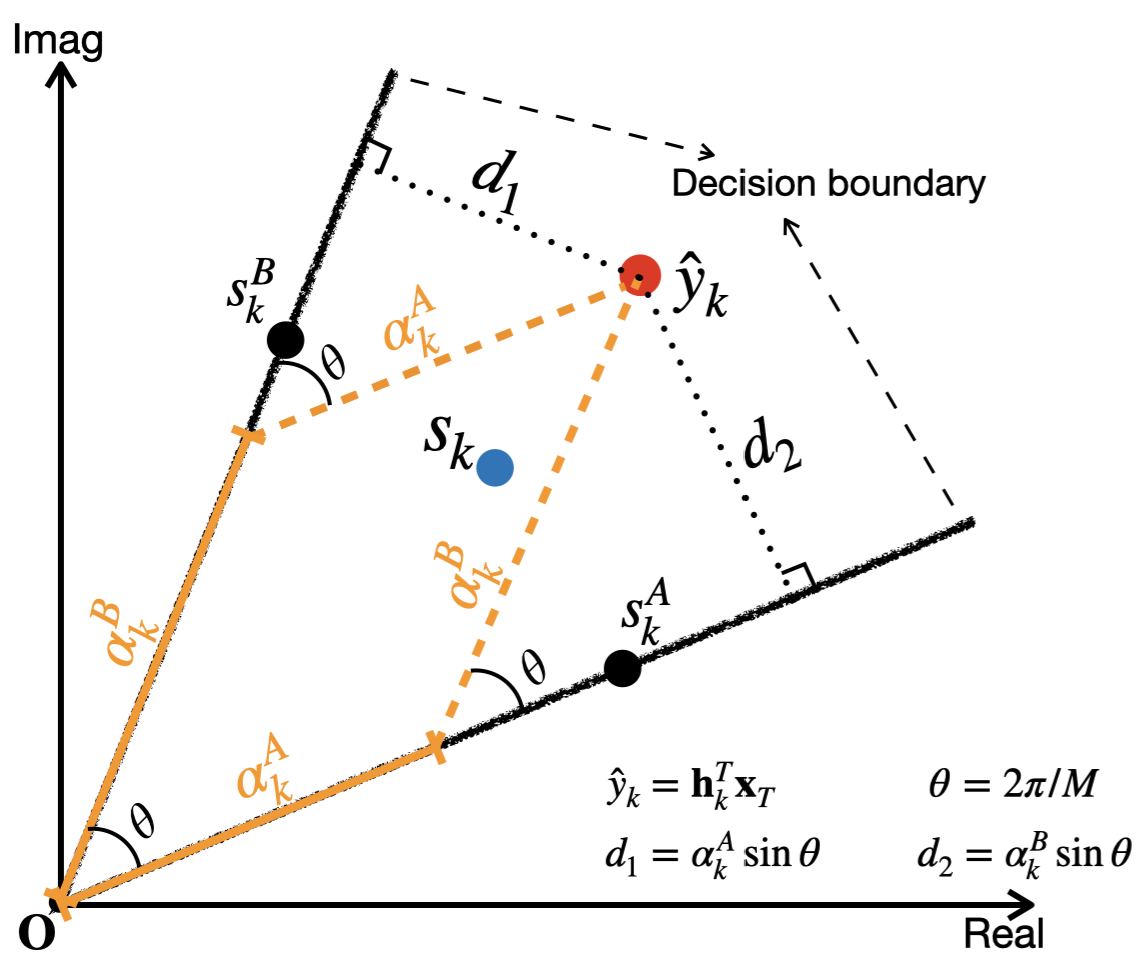}
\caption{An illustration of the CI formulation for 8-PSK.}
\label{fig3}
\vspace{-0.5cm}
\end{figure}
For  clarity, in Fig. \ref{fig3} we depict a piece of the decision region for 8-PSK modulation, where without loss of generality we assume that the data symbol for user $k$ is $s_k=e^{j\frac{2\pi}{M}}$.
{\color{black}The goal of the CI metric is to maximize the distance between the noise-free received signal, denoted by $\hat{y}_k$ in Fig. \ref{fig3},  and the decision boundary of $s_k$, i.e., maximizing $\min\{d_1,d_2\}$. To formulate $d_1$ and $d_2$ mathematically, we decompose $\hat{y}_k$ along  the directions of the two decision boundaries as
$$\hat{y}_k=\h_k^\mathsf{T}\x_T=\alpha_k^As_k^A+\alpha_k^Bs_k^B,$$
where $s_k^A$ and $s_k^B$ are the unit vectors in the directions of the decision boundaries given by $s_k^A=s_ke^{-j\frac{\pi}{M}}~~\text{and}~~s_k^B=s_ke^{j\frac{\pi}{M}},$ respectively. 
Then, as can be observed from Fig. \ref{fig3}, we have
\begin{equation*}
\begin{aligned}
\min\left\{d_1,d_2\right\}=\min\left\{\alpha_k^A\sin\theta,\alpha_k^B\sin\theta\right\}=\min\left\{\alpha_k^A,\alpha_k^B\right\}\sin\theta.
\end{aligned}
\end{equation*}}
\hspace{-0.2cm}Since $\theta=\frac{2\pi}{M}$ is known as long as  the constellation level $M$ is given, the distance is only determined by $\min\{\alpha_k^A,\alpha_k^B\}$.

Based on the above discussions, the CI effect for all users in the system can be characterized by the value of  $\min_{k\in\{1,2,\dots, K\}}\{\alpha_k^A, \alpha_k^B\}$. %, which measures the minimum distance from all noise-free received signals to the  decision boundaries of their corresponding constellation symbols.
  Accordingly, the one-bit precoding design problem that maximizes the CI effect can be formulated as
\begin{subequations}
\begin{align}
\max_{\x_T}~&\min_{k\in\{1,2,\dots, K\}}~\left\{\alpha_k^A,\alpha_k^B\right\}\notag\\
\text{s.t. }~&\h_k^\mathsf{T}\x_T=\alpha_k^As_k^A+\alpha_k^Bs_k^B,\quad k=1, 2,\dots,K,\tag{1}\\
~&\x_T(i)\in\left\{\pm\frac{1}{\sqrt{2N_t}}\pm\frac{1}{\sqrt{2N_t}}j\right\},\quad i=1, 2,\dots, N_t.\notag
\end{align}
\end{subequations}
By denoting $\hat{\x}_T=\sqrt{2N_t}\x_T$ and $\hat{\H}=\frac{\H}{\sqrt{2N_t}}$, we can further remove the problem-dependent quantity $\frac{1}{\sqrt{2N_t}}$ from the constraint on $\x_T$. With a little bit notational ambiguity, we still use $\x_T$ and $\H$, then problem (1) can be rewritten as
\begin{subequations}
\begin{align}
\max_{\x_T}~&\min_{k\in\{1,2,\dots,K\}}~\left\{\alpha_k^A,\alpha_k^B\right\}\notag\\
\hspace{-0.2cm}\text{(}\text{P}_0\text{)}\hspace{0.9cm}\text{s.t. }~&\h_k^\mathsf{T}\x_T=\alpha_k^As_k^A+\alpha_k^Bs_k^B,\quad k=1, 2,\dots,K,\label{2a}\\
~&\x_T(i)\in\left\{\pm1 \pm j\right\}, \quad i=1, 2,\dots, N_t.
\end{align}
\end{subequations}
We refer to (P$_0$) as the CI-based symbol scaling model for one-bit precoding design, {\blue which is the main optimization problem of this paper}.

\section{Complexity Analysis}\label{complexity}

{\blue Problem (P$_0$) is a large-scale integer program, as the number of antennas $N_t$ is usually large in massive MIMO systems.
Such a problem is considered challenging to solve from the optimization viewpoint.}
In spite of the existence of various {\blue algorithmic} works on problem (P$_0$), its complexity status, {\blue or how intrinsically difficult it is from the perspective of theoretical computer science,} remains unknown {\blue in the literature}.
In this section, we fill this theoretical gap, i.e., characterizing the complexity of problem (P$_0$).

We first consider the case where there is only a single user in the system.
\begin{theorem}\label{th1}
The CI-based one-bit precoding problem (P$_0$) is NP-hard in the single-user case, i.e., $K=1$.
\end{theorem}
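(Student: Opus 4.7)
The plan is to prove Theorem~\ref{th1} by a polynomial-time reduction from the classical NP-complete problem \textsc{Partition}: given positive integers $w_1, \ldots, w_n$ with $\mathbf{w} = [w_1, \ldots, w_n]^\mathsf{T}$, decide whether there exists $\mathbf{b} \in \{\pm 1\}^n$ such that $\mathbf{w}^\mathsf{T}\mathbf{b} = 0$. The high-level idea is to build a single-user instance of (P$_0$) whose optimal value is an explicit affine function of $\min_{\mathbf{b} \in \{\pm 1\}^n}|\mathbf{w}^\mathsf{T}\mathbf{b}|$, so that any oracle for (P$_0$) immediately resolves \textsc{Partition}.

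Concretely, I would take $K = 1$, $M = 4$ (QPSK), $N_t = n$, data symbol $s_1 = 1$, and channel vector $\h_1 = \mathbf{w}$ (purely real). Then $s_1^A = e^{-j\pi/4} = (1-j)/\sqrt{2}$ and $s_1^B = e^{j\pi/4} = (1+j)/\sqrt{2}$. Writing $\x_T = \mathbf{a} + j\mathbf{b}$ with $\mathbf{a}, \mathbf{b} \in \{\pm 1\}^n$ gives $\h_1^\mathsf{T}\x_T = \mathbf{w}^\mathsf{T}\mathbf{a} + j\mathbf{w}^\mathsf{T}\mathbf{b}$. Inverting the $2\times 2$ real system in~(\ref{2a}) determined by $s_1^A$ and $s_1^B$ then yields
\begin{equation*}
\alpha_1^A = \frac{\mathbf{w}^\mathsf{T}(\mathbf{a} - \mathbf{b})}{\sqrt{2}}, \qquad \alpha_1^B = \frac{\mathbf{w}^\mathsf{T}(\mathbf{a} + \mathbf{b})}{\sqrt{2}}.
\end{equation*}

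Using the identity $\min(x,y) = \tfrac{x+y}{2} - \tfrac{|x-y|}{2}$ collapses the single-user objective to $\min(\alpha_1^A, \alpha_1^B) = \tfrac{1}{\sqrt{2}}\bigl(\mathbf{w}^\mathsf{T}\mathbf{a} - |\mathbf{w}^\mathsf{T}\mathbf{b}|\bigr)$, which is now \emph{separable} in $\mathbf{a}$ and $\mathbf{b}$. Since every $w_i > 0$, maximizing $\mathbf{w}^\mathsf{T}\mathbf{a}$ over $\mathbf{a} \in \{\pm 1\}^n$ is trivially attained at $\mathbf{a} = \mathbf{1}$ with value $\sum_i w_i$, independently of $\mathbf{b}$, while the $\mathbf{b}$-piece is precisely the \textsc{Partition} objective $\min_{\mathbf{b} \in \{\pm 1\}^n}|\mathbf{w}^\mathsf{T}\mathbf{b}|$. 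The optimum of the constructed instance is therefore $V^\star = \tfrac{1}{\sqrt{2}}\bigl(\sum_i w_i - \min_{\mathbf{b}}|\mathbf{w}^\mathsf{T}\mathbf{b}|\bigr)$, and the \textsc{Partition} instance is feasible if and only if $V^\star = \tfrac{1}{\sqrt{2}}\sum_i w_i$; since the whole construction has size polynomial in $n$ and the bit-length of $\mathbf{w}$, NP-hardness of~(P$_0$) with $K=1$ follows.

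The step I expect to require the most care is the algebraic passage from (\ref{2a}) to the separable form above: one must (i) correctly invert the $2 \times 2$ real system defined by $s_1^A$ and $s_1^B$---which is well-posed because these two complex numbers are $\mathbb{R}$-linearly independent whenever $M \geq 3$---and (ii) verify that, after applying the $\min$ operation, the variables $\mathbf{a}$ and $\mathbf{b}$ genuinely separate so that the outer maximization factors into two independent sub-maximizations. Everything else---recognizing the $\mathbf{b}$-subproblem as \textsc{Partition} and checking polynomial-time encodability---is routine. It is worth noting that this reduction only yields \emph{weak} NP-hardness, since the $w_i$ may be exponentially large in $n$; the strong NP-hardness claim for general $K$ stated later in the paper will presumably require a reduction from a strongly NP-hard problem such as $3$-\textsc{Partition}.
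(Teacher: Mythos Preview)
Your proof is correct and takes essentially the same approach as the paper's: both reduce from \textsc{Partition} by constructing a single-user instance in which $\alpha^A$ and $\alpha^B$ have the form $(\text{weight-sum term})\mp(\text{signed-weight term})$, so that the real part of $\x_T$ is trivially set to $\mathbf{1}$ and the imaginary part encodes the partition. The differences are purely cosmetic---you use $M=4$, $s_1=1$, a real channel, and the identity $\min(x,y)=\tfrac{x+y}{2}-\tfrac{|x-y|}{2}$ to make the separability explicit, whereas the paper uses $M=8$, $s=1+j$, the complex channel $\mathbf{a}+j\mathbf{a}$, and argues $\RR(\x_T^\ast)=\mathbf{1}$ directly.
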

\begin{proof}
See Appendix \ref{AppendixA}.
\end{proof}
%The proof is based on a polynomial-time transformation from the equipartition problem.  The details are given in Appendix \ref{AppendixA}.

The following Theorem \ref{th2} considers the more general  multiuser case with $K\geq 1$.%, whose proof is provided in Appendix \ref{AppendixB}.
\begin{theorem}\label{th2}
The CI-based one-bit precoding problem (P$_0$) is strongly NP-hard. Moreover, there is no polynomial-time constant approximation algorithm for (P$_0$), unless P~$=$~NP.
\end{theorem}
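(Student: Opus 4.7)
The plan is to establish both conclusions of Theorem \ref{th2}---strong NP-hardness and the absence of a polynomial-time constant-factor approximation---by a single polynomial-time gap-preserving reduction from a source problem that already enjoys both properties. A convenient candidate is a PCP-style problem such as MAX-E3-LIN-2, for which H\aa stad's theorem provides a $(1-\epsilon,\,1/2+\epsilon)$ hardness gap that is strong enough to rule out any constant-factor approximation for (P$_0$). An equally reasonable alternative is to start from 3-PARTITION (strongly NP-hard) and to argue inapproximability by a separate gap argument; the MAX-E3-LIN-2 route is preferable because it handles both claims simultaneously.

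Concretely, the reduction would use $N_t=n$ antennas to represent the $n$ Boolean unknowns, with the real part of $\x_T(i)\in\{\pm 1 \pm j\}$ playing the role of the truth assignment of the $i$-th variable. For each linear equation on three variables, a user $k$ would be introduced whose channel $\h_k$ is supported on the indices of those three variables, with signs encoding any literal negations and with magnitudes uniformly scaled. The data symbol $s_k$ would be chosen from a small PSK constellation so that the CI feasibility constraint $\h_k^\mathsf{T}\x_T=\alpha_k^A s_k^A + \alpha_k^B s_k^B$ forces $\min\{\alpha_k^A,\alpha_k^B\}$ to lie above a prescribed positive constant $\tau_1$ when the equation is satisfied and below a strictly smaller constant $\tau_0$ otherwise. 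This mirrors the pattern already visible in the proof of Theorem \ref{th1}, where $\mathbf{a}>\mathbf{0}$ pins down the real part of the optimizer and the CI objective decomposes cleanly along the two boundary directions $s^A$ and $s^B$. A small number of auxiliary ``pinning'' users with dominant channel gains---again in the spirit of the sign trick that forces $\RR(\x_T^*)=\mathbf{1}$ in Theorem \ref{th1}---would be used to lock the imaginary components of $\x_T$ at prescribed values, so that the only effective decision variables correspond exactly to the Boolean unknowns of the source instance.

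The main obstacle is aligning the CI objective, which is a \emph{minimum} over all users, with the \emph{sum}-style objective of MAX-E3-LIN-2 while preserving a constant multiplicative gap. A single badly violated equation would depress the min to a small value even in an almost-satisfiable instance, so the reduction must start from a PCP-style ``gap'' version of the source problem in which the YES case guarantees that all equations are satisfiable (hence every user contributes at least $\tau_1$ and the min stays above $\tau_1$) while the NO case guarantees that at most a fraction $\rho<1$ of equations can be satisfied simultaneously (hence at least one user drags the min below $\tau_0$). Getting a \emph{constant} multiplicative ratio $\tau_1/\tau_0$ that is independent of $n$ and $m$ will require careful scaling of $\|\h_k\|$ and possibly repeating each user $\Theta(m)$ times so that the counting gap of MAX-E3-LIN-2 converts into a worst-case gap of the min. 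Once such a gap-preserving reduction is in hand, any polynomial-time algorithm with constant worst-case approximation ratio for (P$_0$) would distinguish the YES and NO cases of the source problem and therefore contradict P $\neq$ NP, yielding both strong NP-hardness and the stated inapproximability; together with Theorem \ref{th1} this completes the complexity characterization of (P$_0$).
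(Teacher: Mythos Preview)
There are two genuine gaps in your plan. First, MAX-E3-LIN-2 with \emph{perfect} completeness is solvable in polynomial time by Gaussian elimination over $\mathbb{F}_2$; H\aa stad's $(1-\epsilon,\tfrac12+\epsilon)$ gap crucially has only near-perfect completeness. Hence the step ``the YES case guarantees that all equations are satisfiable'' is incompatible with your chosen source problem, and without perfect completeness at least one user will always drag the minimum down, so the min-objective cannot be separated from the NO case. The proposed remedy of ``repeating each user $\Theta(m)$ times'' does not help either: duplicating a user leaves $\min_k\{\alpha_k^A,\alpha_k^B\}$ unchanged, so a fraction-style gap on the number of satisfied equations never converts into a gap on the minimum.

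Second, even if you switched to a PCP source with perfect completeness (e.g., gap-3SAT), a fixed threshold pair $\tau_1>\tau_0>0$ would rule out approximation ratio $\tau_1/\tau_0$ only, not \emph{every} constant. To get the ``no constant approximation'' clause you need the NO case to force the optimum to be nonpositive (recall that $\min\{\alpha_k^A,\alpha_k^B\}$ can be negative when the noise-free signal lands outside the decision region). The natural route---and the one the paper's technical report most plausibly takes---is to reduce from a strongly NP-hard feasibility problem (e.g., 3-Partition or 3-Dimensional Matching, whose numerical data are polynomially bounded) so that a YES instance yields $\mathrm{OPT}\ge\tau>0$ while a NO instance yields $\mathrm{OPT}\le 0$. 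Any polynomial-time algorithm returning a feasible point with objective at least $\mathrm{OPT}/c$ for some constant $c>0$ would then output a positive value in the YES case and a nonpositive value in the NO case, deciding the source problem; this simultaneously yields strong NP-hardness and rules out every constant factor. Your pinning-user device and the sign trick from Theorem~\ref{th1} are still the right tools for encoding the Boolean variables, but the source problem and the gap structure need to be changed as above.
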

\begin{proof}
See Appendix \ref{AppendixB}.
\end{proof}

The above complexity results reveal that the (worst-case) computational complexity of globally solving (P$_0$) is exponential (if P~$\neq$~NP), which is prohibitively high for the massive MIMO system whose corresponding problem size is large. In addition, since the precoding scheme {\blue is} symbol-level based, (P$_0$) must be solved at the symbol rate, which imposes high requirement on the efficiency of the corresponding algorithm.
As such, instead of insisting on finding the optimal solution, we focus our attention on designing efficient algorithms for finding high-quality solutions {\blue to} problem (P$_0$).

\section{Proposed Negative $\ell_1$ Penalty Approach} \label{penaltymodel}

In this section, we first introduce a compact form of problem (P$_0$), which is more favorable for the algorithmic design. Then, we transform the compact form into a novel negative $\ell_1$ penalty model and give the algorithmic framework of the proposed negative $\ell_1$ penalty approach.
\subsection{A Compact Form of (P$_0$)}
In this subsection, we briefly introduce a compact form of (P$_0$) proposed in \cite{CImodel}. Recall that $\alpha_k^A$ and $\alpha_k^B$ are both real numbers.
Therefore, by rewriting the complex-valued constraints (\ref{2a}) into the real-valued form, we can express $[\alpha_k^A, \alpha_k^B]^\mathsf{T}$ explicitly as a function of $\h_k$, $s_k$, and $\x=[\RR(\x_T)^\mathsf{T},\I(\x_T)^\mathsf{T}]^\mathsf{T}$. Moreover, the original maximization problem can be converted into a minimization problem (by adding a negative sign in the objective). Then we arrive at the following compact form:
\begin{equation}\label{maxminform}
\begin{aligned}
\min_{\x}~&\max_{l\in\{1,2,\dots,2K\}}\alpha_l\\
\qquad\text{s.t. }~&\a=\mathbf{A}\x,\\
~&x_i\in\left\{-1,1\right\},~i=1,2,\dots, 2N_t,
\end{aligned}
\end{equation} where $\a\hspace{-0.05cm}=\hspace{-0.05cm}-\hspace{-0.05cm}
\left[\alpha_1^A, \alpha_1^B, \dots, \alpha_K^A, \alpha_K^B\right]^\mathsf{T}\hspace{-0.07cm}\triangleq\hspace{-0.07cm}[\alpha_1,\alpha_2,\dots,\alpha_{2K}]^\mathsf{T}\hspace{-0.05cm}\in\R^{2K}$ and $\A\hspace{-0.05cm}=\hspace{-0.05cm}-\hspace{-0.05cm}\left[\mathbf{V}_1^\mathsf{T},\mathbf{V}_2^\mathsf{T},\dots,\mathbf{V}_K^\mathsf{T}\right]^\mathsf{T}\hspace{-0.1cm}\in\R^{2K\times 2N_t}$ with
\begin{equation*}\label{Vk}
\mathbf{V}_k=
\displaystyle\frac{
\left[
\begin{matrix}
\I(s_k^B)&-\RR(s_k^B)\\-\I(s_k^A)&\RR(s_k^A)
\end{matrix}\right]
\left[
\begin{matrix}
\RR(\h_k^\mathsf{T})&-\I(\h_k^\mathsf{T})\\\I(\h_k^\mathsf{T})&\RR(\h_k^\mathsf{T})
\end{matrix}\right]}{\RR(s_k^A)\I(s_k^B)-\I(s_k^A)\RR(s_k^B)}.
\end{equation*}
See \cite{CImodel} for detailed derivations.

The constraint $\a=\mathbf{A}\x$ in problem \eqref{maxminform} can be further substituted into the objective, which leads to the following  form:
\begin{equation*}\label{Pe}
\text{(P)}\qquad\min_{\x\in\{-1,1\}^{n}}\max_{l\in\{1,2,\dots,m\}}\mathbf{a}_l^\mathsf{T}\x,
\end{equation*}
where $n=2N_t,$ $m=2K,$ and $\mathbf{a}_l^\mathsf{T}$ is the $l$-th row of $\mathbf{A}$.
In the following, we shall design algorithms based on the compact form (P), which appears to be easier to handle than  (P$_0$).
\vspace{-0.3cm}
\subsection{Proposed Negative $\ell_1$ Penalty Approach} One main difficulty of problem (P) lies in its discrete one-bit constraint. To deal with such difficulty, we resort to the penalty technique, which penalizes the constraint into the objective with some carefully selected penalty function.
Specifically,
the proposed approach relaxes the discrete one-bit constraint $\x\in\left\{-1,1\right\}^n$ into the continuous constraint $\x\in\left[-1, 1\right]^n$, and includes a \emph{negative $\ell_1$ penalty} into the objective as
\begin{equation*}
\text{(}\text{P}_\lambda\text{)} \qquad\min_{\x\in[-1,1]^{n}}\max_{l\in\{1,2,\dots,m\}}~\mathbf{a}_l^\mathsf{T}\x-\lambda\|\x\|_1,
\end{equation*}
where $\lambda\geq 0$ is the penalty parameter. Intuitively, the negative $\ell_1$ penalty term in (P$_\lambda$) encourages large magnitudes of $\left\{x_i\right\}.$

The following theorem establishes both the global and  local equivalence of the original problem (P) and the penalty model (P$_\lambda$). %In fact, this result holds for any Lipschitz continuous objective function with $\lambda>L$, where $L$ is the Lipschitz constant of the objective function.
\begin{theorem}\label{equivalence}
If the penalty parameter $\lambda$ in (P$_{\lambda}$) satisfies $\lambda>\max_l\|\mathbf{a}_l\|_\infty$, then the following results hold:
\begin{enumerate}
\renewcommand{\labelenumi}{(\theenumi)}
\item[(i)] Any optimal solution of (P$_\lambda$) is also an optimal solution of (P), and vice versa.
\item[(ii)] Any local minimizer of (P$_\lambda$) is a feasible point of (P); on the other hand, any feasible point of (P) is a (strict) local minimizer of (P$_\lambda$).
\end{enumerate}
\end{theorem}
\begin{proof}
See Appendix \ref{AppendixC}.
\end{proof}
%The proof of Theorem \ref{equivalence} is given in  Appendix \ref{AppendixC}.
Theorem \ref{equivalence} shows that problems (P) and (P$_\lambda$) are equivalent in the sense that they share the same global and local solutions. {\color{black}Three} remarks on the equivalence result in Theorem \ref{equivalence} are in order.
 (i) This equivalence result is highly nontrivial from the optimization perspective, because it shows that problem (P$_\lambda$) is not only globally but also locally equivalent to (P). It is generally difficult to find such an exact penalty model, especially when the original problem is non-smooth (which is our case).
(ii) The equivalence result  serves as a (necessary) theoretical guarantee that  we can forget the original \emph{discrete} model (P), and focus on the \emph{continuous} model (P$_\lambda$) for the algorithmic design. This is important and beneficial due to the following reasons. First, it gives us more freedom to design algorithms, since continuous problems are generally easier to handle than discrete problems. Second  (and more importantly), solving the continuous problem (P$_\lambda$) is more likely to give us a high-quality solution because we are solving the problem in a larger searching space in which the homotopy (sometimes called warm-start) technique \cite{homotopy3,homotopy1,penaltyproof} can help to bypass bad local solutions (equivalently bad feasible solutions of (P)).  {\color{black}(iii)  Theorem \ref{equivalence} can be generalized and might be useful for other applications. Specifically, for any optimization problem with Lipschitz continuous objective and discrete constraint $\x\in\{-1,1\}^n$, we can obtain a corresponding negative $\ell_1$ penalty model, and the (global and local) equivalence between the two problems holds when $\lambda> L$, where $L$ is the Lipschitz constant of the objective function. }

%{\color{black} We are now ready to give our proposed algorithmic framework for solving problem (P).  As mentioned before, the homotopy technique can be employed together with the penalty method to improve the quality of the obtain {\blue obtained?} solution. } %In the proposed approach, We employ the  homotopy technique to improve the quality of the obtained solution.
{\blue We are now ready to give our proposed algorithmic framework for solving problem (P).
We employ the homotopy technique  together with the penalty method. }
Specifically, we solve problem (P$_\lambda$) with a small penalty parameter $\lambda$ at the beginning, then gradually increase the penalty parameter and trace the solution path of the corresponding penalty problems, until the penalty parameter is sufficiently large and a feasible point of (P) is obtained.
{\color{black} Empirically}, this homotopy technique can avoid bad local minima and exhibits better {\color{black} numerical performance} than using a fixed large $\lambda$.
{\blue For more information of homotopy optimization, please refer to \cite{homotopy3,homotopy1,penaltyproof}.}
We name the above procedure for solving problem (P) as the negative $\ell_1$ penalty (NL1P) approach and give the algorithmic framework as follows. {\color{black}Algorithm \ref{algonp} in the fourth line of Algorithm \ref{nl1p}, which is the algorithm designed for solving problem (P$_{\lambda}$), will be elaborated in Section \ref{AMPlambda}.}

 \vspace{0.3cm}
{\color{black}\begin{algorithm}[H]
	\caption{NL1P Approach for Solving Problem (P)}
	\small
	\begin{algorithmic}[1]
		\STATE \textbf{Input:} $\lambda^{(0)}, ~\delta>1, ~\x^{(0)}$.
		\STATE \textbf{Initialize:} $t=0$.
				\REPEAT
		\STATE     Apply Algorithm \ref{algonp} (see further ahead) to solve problem (P$_{\lambda}$) with parameter $\lambda=\lambda^{(t)}$ and initial point $\x^{(t)}$; let the solution be $\x^{(t+1)}$.

		\STATE Set $\tilde{\x}^{(t+1)}=\text{sgn}(\x^{(t+1)})$.
		\STATE 
		Set $\lambda^{(t+1)}=\delta\lambda^{(t)}$ and $t = t+1.$
		\UNTIL{$\x^{(t)}$ satisfies the one-bit constraint}.
		\STATE \textbf{Output:} $\tilde{\x}=\arg\min_{\x\in\{\tilde{\x}^{(1)},\tilde{\x}^{(2)},\dots,\tilde{\x}^{(t)}\}}\max_{l\in\{1,2,\dots,m\}}\mathbf{a}_l^T\x$.%, i.e., the point in  $\{\tilde{\x}^{(1)}, \tilde{\x}^{(2)},\dots,\tilde{\x}^{(t)}\}$ with the optimal function value
	\end{algorithmic} \label{nl1p}
\end{algorithm}}
\vspace{0.2cm}
 \hspace{-0.35cm}{\color{black}Note that in the above algorithm, we quantize all intermediate points $\left\{\x^{(t)}\right\}$  to satisfy the one-bit constraint and choose the quantized point with the best function value as the final output. This simple technique can further improve the quality of the obtained solution.}

%\begin{algorithm}[H]
%\caption{NL1P Approach for Solving Problem (P)}
%\label{nl1p}
%\begin{algorithmic}
%\small
%\STATE Step~\hspace{0.015cm}1 ~Input $\lambda^{(0)},\delta>1,$ $\x^{(0)}$; set $t=0$.
%\STATE Step~2 ~\hspace{-0.1cm}Apply Algorithm \ref{algonp} (see further ahead) to solve problem (P$_{\lambda}$) with parameter $\lambda=\lambda^{(t)}$ and initial point\\ \hspace{0.95cm} $\x^{(t)}$; let the solution be $\x^{(t+1)}$.

%\STATE Step~3 ~\hspace{-0.1cm}Stop if  $\x^{(t+1)}$ satisfies the one-bit constraint; otherwise, set $\lambda^{(t+1)}=\delta\lambda^{(t)}$ and $t=t+1$, go to Step 2.
%\end{algorithmic}
%\end{algorithm}
%\vspace{-0.55cm}

\subsection{Remarks on Proposed NL1P Approach}
In this subsection, we give some {\color{black}important remarks on the proposed NL1P approach. }%discussions on the relationship between the proposed NL1P approach and existing algorithms for solving problem (P).

\subsubsection{Comparison with LP Relaxation Based Approaches}\label{compare1}
Most of the existing approaches {\color{black}for solving (P)} (e.g., MSM\cite{CIfirst,QCECI,CImodel}, OPSU\cite{PBB}, {\color{black}GMSM\cite{GMSM}}, P-BB\cite{PBB}) are based on the LP relaxation model, which corresponds to problem (P$_\lambda$) with $\lambda=0$. 
%{\color{black}{\blue The P-BB algorithm involves branch-and-bound search, which may result in high computational cost, especially when massive MIMO is considered.} The MSM algorithm, {\blue which directly quantizes the solution of the LP relaxation problem, is much more efficient.{\color{black}By further incorporating  some greedy procedures into the MSM algorithm, the OPSU and GMSM  algorithms can achieve much better performance than the MSM algorithm with slightly increased computational cost.   Nevertheless, the greedy technique determines the values of elements sequentially without considering the overall information, which also limits the performance of the OPSU and GMSM algorithms}, especially in the difficult cases where the user-antenna ratio is high or the modulation level is high. }}
{\color{black} As discussed in Section \ref{relatedworks},} the LP relaxation based approaches suffer from a high computational complexity (e.g., P-BB) or their performance degrades in difficult cases (e.g., OPSU, {\color{black}GMSM}, and MSM). Essentially, this is because the LP relaxation model on which they are based is not equivalent to the original model and thus a second stage is still needed to determine the values of elements  that do not satisfy the one-bit constraint, which is independently very difficult.  In contrast, the proposed  NL1P approach seeks to solve the negative $\ell_1$ penalty model (P$_\lambda$), which is an exact reformulation of the original problem (P). This explains why our proposed approach outperforms the LP relaxation based approaches, as will be observed in the simulation.

\subsubsection{Comparison with the Work in \cite{GEMM}}\label{compare2}

It is interesting to note that, though with different motivations, problem (P) is in the same form as the problem considered in \cite{GEMM}, {\color{black} where one-bit precoding design for QAM modulation based on the SEP metric is studied.}
In contrast to our proposed approach that deals with the non-smooth objective function, the authors in \cite{GEMM} developed a penalty method based on a smooth approximation.
 Specifically, they applied the log-sum-exponential approximation to the non-smooth maximum function and considered the following  approximation problem of (P):
 \begin{equation}\label{approx}
 \min_{\mathbf{x}\in\{-1,1\}^n} \sigma \log\left(e^{\frac{\mathbf{a}_1^\mathsf{T}\mathbf{x}}{\sigma}}+e^{\frac{\mathbf{a}_2^\mathsf{T}\mathbf{x}}{\sigma}}+\dots+ e^{\frac{\mathbf{a}_m^\mathsf{T}\mathbf{x}}{\sigma}}\right).
 \end{equation}
For the above (smooth) approximation  problem, the following negative square penalty model was proposed by adding a penalty term $-\lambda\|\mathbf{x}\|_2^2$ to the objective and relaxing the discrete constraint into the continuous box constraint:
 \begin{equation}\label{nsp}
 \min_{\mathbf{x}\in[-1,1]^n} \sigma \log\left(e^{\frac{\mathbf{a}_1^\mathsf{T}\mathbf{x}}{\sigma}}+e^{\frac{\mathbf{a}_2^\mathsf{T}\mathbf{x}}{\sigma}}+\dots+ e^{\frac{\mathbf{a}_m^\mathsf{T}\mathbf{x}}{\sigma}}\right)-\lambda\|\mathbf{x}\|^2_2.
 \end{equation}
{\color{black}A problem associated with this smoothing-based approach is that} its performance is sensitive to the smoothing parameter $\sigma$. %The performance of this  smoothing based approach is sensitive to the choice of the smoothing parameter $\sigma$.
On the one hand, the smoothing parameter is expected to be as small as possible to obtain an accurate approximation. On the other hand, a small smoothing parameter will result in %{\color{black}a large condition number of the Hessian} of the objective function in \eqref{approx}. 
a large Lipschitz constant of the  gradient of the objective function in \eqref{approx} (which is proportional to $1/\sigma$). 
This will pose great challenge to the algorithmic design, since first-order algorithms generally converge slowly when  the Lipschitz constant of gradient is {\color{black}large.}
 %he Lipschitz constant of the  gradient of the objective is %$\frac{1}{\sigma}\max_l\|\mathbf{a}_l\|_2^2$, which is proportional to $1/\sigma$.
In addition, it is shown in  \cite[Theorem 2]{GEMM} that the negative square penalty model \eqref{nsp} is locally equivalent to   problem \eqref{approx} only when the penalty parameter $\lambda$ is larger than the Lipschitz constant of the gradient of the objective function in \eqref{approx}. This means that the penalty parameter $\lambda$ needs to be very large to guarantee the local equivalence if $\sigma$ is small, which leads to more numbers of  iterations if the homotopy technique (which increases $\lambda$ gradually) is employed. %In addition, a large Lipschitz constant of the gradient also leads to more numbers of inner iterations for their proposed gradient based algorithm to solve the corresponding penalty model, as discussed in [23].
This reveals the dilemma of the choice of the smoothing parameter $\sigma$ in \eqref{approx}.

 In contrast, our proposed approach deals with the non-smooth objective \emph{directly}, and thus avoids  the above dilemma of choosing the smoothing parameter $\sigma$. Moreover, the required lower bound of the penalty parameter in our penalty model (P$_\lambda$) to guarantee the equivalence, i.e., $\max_l\|\mathbf{a}_l\|_\infty$,  is only related to the problem parameter. %and is generally much smaller than $\frac{1}{\sigma}\max_l\|\mathbf{a}_l\|_2^2$ (which is the lowerbound in \cite{GEMM}) since $\sigma$ must be set very small to guarantee good performance.}
  Nevertheless, the resulting non-smooth penalty model (P$_\lambda$) seems more challenging to solve than the smooth penalty model in \cite{GEMM}. In the next section, we will propose an efficient algorithm for solving problem (P$_\lambda$) by taking care of its special structure.
\subsubsection{Necessity of the Negative $\ell_1$ Penalty}
One may ask why we do not add the negative square penalty to the objective as in \cite{GEMM}, whereby the resulting model is
\begin{equation}\label{squarepenalty}
\min_{\x\in[-1,1]^{n}}\max_{l\in\{1,2,\dots,m\}}~\mathbf{a}_l^\mathsf{T}\x-\lambda\|\x\|_2^2.
\end{equation}
Next we show that \eqref{squarepenalty}
is not a good penalty model for problem (P). Specifically, for any $\lambda>0$, local minimizers of \eqref{squarepenalty} are not necessarily feasible points of problem (P). We give an example as follows.
\begin{example}
Consider the following problem:
\begin{equation}\label{example}
\begin{aligned}
\min_{\x\in\{-1,1\}^2}~&\max_{l\in\{1,2,3,4\}}~\mathbf{a}_l^\mathsf{T}\x\\
%\text{\normalfont s.t. }~&x_i\in\left\{-1,1\right\},\quad i=1,2,
\end{aligned}
\end{equation}
where
$$\A=\left(\begin{matrix} \mathbf{a}_1^\mathsf{T}\\\mathbf{a}_2^\mathsf{T}\\\mathbf{a}_3^\mathsf{T}\\\mathbf{a}_4^\mathsf{T}\end{matrix}\right)=\left(\begin{matrix} 1&-1\\-1&1\\1&1\\-1&-1\end{matrix}\right).$$
The corresponding negative square penalty problem is
\begin{equation}\label{l2penalty}
\begin{aligned}
\min_{\x\in[-1,1]^2}~&\max_{l\in\{1,2,3,4\}}~\mathbf{a}_l^\mathsf{T}\x-\lambda\|\x\|_2^2\\
%\text{\normalfont s.t. }~&-1\leq x_i\leq 1,\quad i=1,2.
\end{aligned}
\end{equation}

We claim that for any $\lambda>0$, $\mathbf{0}$ is a local minimizer of (\ref{l2penalty}) (but not a feasible point of (\ref{example})).
Note that (\ref{l2penalty}) is equivalent to
\begin{equation*}
\begin{aligned}
\min_{\x\in[-1,1]^2}~&\|\x\|_1-\lambda\|\x\|_2^2\\
%\text{\normalfont s.t. }~&-1\leq x_i\leq 1,\quad i=1,2.
\end{aligned}
\end{equation*}
Given any $\lambda>0$, then for all $\x\in B\left(\mathbf{0},\frac{1}{\lambda}\right)$, it holds that
$$\|\x\|_1-\lambda\|\x\|_2^2\geq \|\x\|_2-\lambda\|\x\|_2^2=\|\x\|_2(1-\lambda\|\x\|_2)\geq 0,$$ which implies that $\mathbf{0}$ is a local minimizer of (\ref{l2penalty}).

\end{example}

The above example shows the failure of the negative square penalty and reveals the difficulty of finding an exact penalty for a non-smooth problem like (P). The main reason for the failure of the negative square penalty lies in that a smooth penalty is utilized in the problem where the objective is non-smooth.  This also explains why we choose the non-smooth negative $\ell_1$ penalty for problem (P).
 In fact, the negative $\ell_1$ penalty we adopt in this paper is the simplest penalty we can find that guarantees the local equivalence.

{\color{black}\subsubsection{Applications to More Practical Communication Scenarios}
While initially designed for solving the one-bit precoding problem in a flat-fading channel with PSK signaling, the proposed NL1P approach can be directly applied or easily generalized to various other communication scenarios.
\begin{itemize}
\item \emph{QAM constellation.} As has been discussed before,  the one-bit precoding problem for QAM constellation has been formulated in the same form as (P) in \cite[Eq. (16)]{GEMM}. The proposed NL1P approach can be directly applied to solve the model in \cite{GEMM}.
\item \emph{OFDM system.} The CI-based model in \cite{CIfirst}  has been generalized to  the OFDM system  \cite[Eq. (20)]{OFDM}. Following the same idea, the symbol scaling model (P$_0$) can also be generalized to the OFDM system and our proposed NL1P approach is still applicable. 
\item \emph{Quantized constant envelope (QCE) precoding.} For the more general QCE precoding case, the real and imaginary parts of each transmit signal couples in the QCE constraint, and thus the real-space QCE constraint is  two-dimensional  and cannot be further decomposed as in the one-bit case; see \cite[Eq. (12)]{QCECI}. In this case, the negative $\ell_1$ penalty does not work any more. However, we could make a slight modification to change the negative $\ell_1$ norm into the sum of the negative $\ell_2$ norm,  where each $\ell_2$ norm is introduced to penalize each two-dimensional QCE constraint. Please see more details on this generalization in \cite{ICASSPQCE}.
\end{itemize}
}

  %as Algorithm \ref{nl1p} proceeds,
  %if any entry becomes either 1 or -1, we then keep it fixed in later iterations and update only the remaining entries, until all entries satisfy the one-bit constraint.We name this low-complexity implementation as accelerated negative $\ell_1$ penalty approach, or ANL1P for short.
%We will give the algorithmic framework in Section \ref{framework}.

%\begin{remark}
%Theorem \ref{theorem3} and Theorem \ref{locequivalence} of negative $l_1$ penalty also holds true for general Lipschitz continuous function $f(\x)$ with penalty parameter $\lambda>L$, where $L$ is the Lipschitz constant of $f$.
%\end{remark}

\vspace{-0.12cm}
\section{An Efficient Algorithm for solving problem (P$_\lambda$)}\label{algorithm}
In this section, we propose an efficient algorithm for solving the non-smooth non-convex subproblem (P$_\lambda$) in the NL1P approach. More specifically, we first transform problem (P$_\lambda$) into an equivalent min-max problem ($\widehat{\text{P}}_\lambda$) in Section \ref{minmaxreform}. Then we propose an efficient alternating proximal/projection gradient descent ascent (APGDA) algorithm for solving a class of non-smooth min-max problems (which includes our problem ($\widehat{\text{P}}_\lambda$) as a special case) and give the convergence analysis in Section \ref{AM} and Section \ref{AMconverge}, respectively. In Section \ref{AMPlambda}, we apply the proposed APGDA algorithm to solve problem ($\widehat{\text{P}}_\lambda$) and give some discussions.
\vspace{-0.15cm}
\subsection{Min-Max Reformulation of (P$_{\lambda}$)}\label{minmaxreform}
In this subsection, we reformulate problem (P$_\lambda$) into an equivalent min-max problem.
Recall that the objective in (P$_\lambda$) is the maximum of a finite collection of functions. By introducing an auxiliary variable
\begin{equation}\label{delta}
\y\in\Delta\triangleq\left\{\y\in\R^m\mid\mathbf{1}^\mathsf{T}\y=1, \y\geq\mathbf{0}\right\},
\end{equation}
 (P$_\lambda$) can be equivalently transformed into the following min-max problem:
\begin{equation*}
\text{(}\widehat{\text{P}}_\lambda\text{)}\quad\min_{\x\in[-1,1]^n}~\max_{\y\in\Delta}~\y^\mathsf{T}\A\x-\lambda\|\x\|_1.
\end{equation*}
 The two problems (P$_\lambda$) and ($\widehat{\text{P}}_\lambda$) are equivalent in the sense that an optimal solution (stationary point) of one problem can be easily constructed given an optimal solution  (stationary point) of the other problem \cite{HiBSA}.

Below we shall focus on designing an efficient algorithm for solving the reformulated min-max problem ($\widehat{\text{P}}_\lambda$). In the next subsection, we shall develop an algorithm for solving a class of non-smooth nonconvex-concave min-max problems, which includes ($\widehat{\text{P}}_\lambda$) as a special case.
\begin{comment}
In the next theorem, we further establish the equivalence between critical points of (P$_\lambda$) and stationary points of ($\widehat{\text{P}}_\lambda$), see Section \ref{notions} for the definitions of critical points and stationary points.
\begin{theorem}\label{stationarythe}
If $(\x^*,\y^*)$ is a stationary point of ($\widehat{\text{P}}_\lambda$), then $\x^*$ is a critical point of (P$_\lambda$). On the other hand, if $\x^*$ is a critical point of (P$_\lambda$), then $\x^*$ together with some $\y^*\in\mathcal{Y}$ is a stationary point of ($\widehat{\text{P}}_\lambda$).
\end{theorem}
The detailed proof of Theorem \ref{stationarythe} can be found in Appendix E.
Based on the above discussions, next we will focus on the min-max problem ($\widehat{\text{P}}_\lambda$) to design the algorithm.
\end{comment}
\vspace{-0.15cm}
\subsection{Proposed Algorithm}\label{AM}
Min-max problems have drawn considerable interest (especially in machine learning and signal processing communities) in recent years. Various algorithms have been proposed for different types of min-max problems\cite{xu2020unified,MMPD,zhang2020singleloop,pmlr-v119-lin20a,HiBSA,dai2020majorized}. However, previous works mainly consider the smooth case\cite{MMPD,xu2020unified,zhang2020singleloop,pmlr-v119-lin20a}. Few works that focus on non-smooth min-max problems all require the non-smooth term to be convex\cite{HiBSA}\cite{dai2020majorized}.  To the best of our knowledge, there is no existing work that covers {\color{black}our problem of interest} ($\widehat{\text{P}}_\lambda$) (where the negative $\ell_1$ penalty term in the objective is both non-smooth and non-convex), and thus no existing algorithm can be directly applied to solve problem ($\widehat{\text{P}}_\lambda$).

In this subsection, we consider a class of non-smooth nonconvex-concave min-max problems
\begin{equation}\label{ourprob}
\min_{\x\in \mathcal{X}}\max_{\y\in \mathcal{Y}}~F(\x,\y)\triangleq f(\x,\y)-g(\x),
\end{equation}
where $f(\x,\y)$ is a smooth function that is non-convex with respect to $\x$ and concave with respect to $\y$,  $g(\x)$ is a non-smooth, proper closed convex function,  and $\mathcal{X}$ and $\mathcal{Y}$ are compact convex sets in $\R^n$ and $\R^m$, respectively. Problem \eqref{ourprob} includes problem ($\widehat{\text{P}}_\lambda$) as a special case. To be specific,  $f(\x,\y)$ and $g(\x)$ correspond to the linear term $\y^\mathsf{T}\A\x$ and  the $\ell_1$ norm $\|\x\|_1$, respectively; $\mathcal{X}$ and $\mathcal{Y}$ correspond to $[-1,1]^n$ and the simplex set $\Delta$ in \eqref{delta}, respectively.

Our proposed algorithm for solving problem \eqref{ourprob} can be regarded as an extension of the  algorithms proposed in \cite{HiBSA} and \cite{xu2020unified} from the smooth case to the non-smooth case, which is independently interesting. In \cite{HiBSA} and \cite{xu2020unified}, the authors proposed  unified frameworks for solving a few different classes of min-max problems including the smooth nonconvex-concave ones, which is a special case of \eqref{ourprob} with $g(\x)=0$.

Similar to \cite{HiBSA} and \cite{xu2020unified}, a perturbed function of the original objective:
\begin{equation*}
\tilde{F}(\x,\y)=F(\x,\y)-\frac{c_k}{2}\|\y\|_2^2=f(\x,\y)-g(\x)-\frac{c_k}{2}\|\y\|_2^2
\end{equation*}
is considered, where the perturbed term is introduced to make $\tilde{F}(\x,\y)$ strongly concave in $\y$. It is shown in \cite{HiBSA} and \cite{xu2020unified} that the perturbed term is important for the convergence of the corresponding algorithms.

At each iteration, the proposed algorithm updates $\x$ and $\y$ alternately as follows: 
\begin{subequations}\label{12}
\begin{align}
\x_{k+1}&\in\arg \min _{\x \in \mathcal{X}}~f(\x_k,\y_k)+\left<\nabla_{\x} {f}(\x_{k}, \y_{k}), \x-\x_{k}\right>-g(\x)-\frac{c_k}{2}\|\y_k\|_2^2+\frac{\tau_{k}}{2}\left\|\x-\x_{k}\right\|_2^{2},\label{updatexx}\\
\y_{k+1}&=\arg\max_{\y\in\mathcal{Y}}~\tilde{F}(\x_{k+1},\y_k)+\left<\nabla_{\y}\tilde{F}(\x_{k+1},\y_k),\y-\y_k\right>-\frac{1}{2\rho_k}\|\y-\y_k\|_2^2\notag\\
&=\text{Proj}_{\mathcal{Y}}\left(\y_k+\rho_k\nabla_{\y} f(\x_{k+1},\y_k)-\rho_kc_k\y_k\right),\label{updateyy}
\end{align}
\end{subequations}
where $\rho_k > 0$ and $\tau_k> 0$ are the properly selected regularization parameters. Generally, the solution to the $\x$-subproblem might not be unique, and in this case we only need to choose {\color{black} an arbitrary} one  from the solution set. We summarize the above algorithm as Algorithm \ref{algorithm1} below. 
\begin{algorithm}
\caption{APGDA Algorithm for Solving Problem \eqref{ourprob}}
\begin{algorithmic}[1]\label{algorithm1}
\small
	\STATE \textbf{Input:} $\x_0,~\y_0,~\{\tau_k\},~\{\rho_k\},~\{c_k\}$.
		\STATE \textbf{Initialize:} $k=0$.
\REPEAT
\STATE Alternately update $\x_{k+1}$ and $\y_{k+1}$ as in \eqref{updatexx} and \eqref{updateyy}.
\STATE Set $k=k+1$.
\UNTIL some stopping criterion is satisfied.
\STATE {\color{black}\textbf{Output:} $\x_{k}$}.
\end{algorithmic}
\end{algorithm}

Some remarks on the proposed Algorithm \ref{algorithm1} and parameters in it are as follows. 
{\color{black}The $\x$-subproblem \eqref{updatexx} can be seen as minimizing a local approximation of $\tilde{F}(\cdot,\y_k)$ around $\x_k$, or more precisely,  a quadratic approximation of the smooth term in $\tilde{F}(\cdot,\y_k)$ around $\x_k$, i.e., $f(\x_k,\y_k)+\left<\nabla_{\x} {f}(\x_{k}, \y_{k}), \x-\x_{k}\right>+\frac{\tau_{k}}{2}\left\|\x-\x_{k}\right\|_2^{2}-\frac{c_k}{2}\|\y_k\|_2^2$, plus the non-smooth term in $\tilde{F}(\cdot,\y_k)$, i.e., $-g(\x)$.  This idea is actually the same as that of the proximal gradient method \cite{proximal} and \eqref{updatexx} can be seen as a proximal gradient step. (The only difference is that $-g(\x)$ is concave here while the traditional proximal gradient method deals with convex functions.) }
%{\color{black}For the $\x$-subproblem \eqref{updatexx}, the sum of the first four terms is a local (linear) approximation of $\tilde{F}(\cdot,\y_k)$. To make the approximation rather accurate, we need the next iterate to be not far from the current one $\x_k$, and thus a regularization term $\frac{\tau_k}{2}\|\x-\x_k\|_2^2$ is included.  This idea is the same as that in the gradient projection and proximal point algorithms. }
Similarly, $\y$ is updated via a classical gradient projection step of the perturbed function. Since Algorithm \ref{algorithm1} updates variable $\x$ by performing a proximal gradient step and variable $\y$ by performing a projection gradient step in an alternating fashion, we name it as the alternating proximal/projection gradient descent ascent (APGDA) algorithm. %\textcolor{blue}{Note that alternating proximal gradient \emph{descent} over one block of variables and projection gradient \emph{ascent} over the other block of variables is sharply different from traditional alternating optimization algorithms which perform proximal/projection gradient descent over all blocks of variables and thus make the convergence analysis considerably more difficult than that of the traditional alternating optimization algorithms.}
 The parameters {\color{black}$\{\tau_k\}$ and $\{\rho_k\}$ in \eqref{12} are the stepsizes of the proximal/projection gradient steps}, 
 and $\{c_k\}$ controls the  accuracy and strong concavity of the perturbed function. Properly selecting those parameters plays a vital role in guaranteeing the convergence and good performance of the proposed algorithm.

The efficiency of the proposed APGDA algorithm depends on the efficiency of solving the subproblems in \eqref{12}. The $\x$-subproblem \eqref{updatexx}  is a non-smooth non-convex problem, which generally does not admit a closed-form solution. However, for many cases of our interest, the $\x$-subproblem \eqref{updatexx} either has a closed-form solution or can be efficiently solved to high accuracy. For instance, if $\mathcal{X}$ is a  Cartesian product of $n$ simple compact convex sets, i.e., $\mathcal{X}=\prod_{i=1}^n\mathcal{X}_i$, and $g(\x)$ is simple and separable in $\x$, i.e., $g(\x)=\sum_{i=1}^n g_i(x_i)$, then the exact solution can be obtained by solving $n$ simple one-dimensional problems. Fortunately, the interested problem ($\widehat{\text{P}}_\lambda$) is such a case and we shall give a detailed discussion on this in Section \ref{AMPlambda} later on. The $\y$-subproblem \eqref{updateyy} is a projection problem onto set $\mathcal{Y}$ and can be efficiently solved for many cases of $\mathcal{Y}$ such as the simplex set $\Delta$ in \eqref{delta}.

\vspace{-0.2cm}
 \subsection{Convergence Analysis}\label{AMconverge}
In this subsection, we establish the global convergence of the proposed APGDA algorithm. Before doing this, we give the following definition of the stationary point, which is a generalization of \cite[Definition 3.1]{zhang2020singleloop} from the smooth case to the non-smooth case.
\newtheorem{definition}{Definition}
\begin{definition}\label{stationarydef}
A pair $(\hat{\x},\hat{\y})$ is called a stationary point of problem (\ref{ourprob}) if
\begin{equation*}\label{conditionw}
\left\{
\begin{aligned}
\mathbf{0}\in&\,\nabla_{\x} f(\hat{\x},\hat{\y})-\partial g(\hat{\x})+\partial \mathbb{I}_\mathcal{X}(\hat{\x});\\
\mathbf{0}\in&-\nabla_{\y} f(\hat{\x},\hat{\y})+\partial \mathbb{I}_
\mathcal{Y}(\hat{\y}),
\end{aligned}
\right.
\end{equation*}
 where $\mathbb{I}_\mathcal{X}(\cdot)$ and $\mathbb{I}_\mathcal{Y}(\cdot)$ are the indicator functions of $\mathcal{X}$ and $\mathcal{Y}$, respectively. \end{definition}

To establish the convergence, we need to impose the following assumptions on $f$ and $g$ in problem \eqref{ourprob}.

\begin{assumption}\label{assumption1}
The function $f(\x,\y)$ is continuously differentiable and there exist constants $L_x, L_{21}, L_y$, and $L_{12}$ such that for $\x,\x_1,\x_2\in \mathcal{X}$ and $\y, \y_1, \y_2\in \mathcal{Y}$, we have
\begin{equation*}
\begin{aligned}
\|\nabla_{\x} f(\x_1,\y)-\nabla_{\x} f(\x_2,\y)\|_2&\leq L_{x}\|\x_1-\x_2\|_2,\\
\|\nabla_{\x} f(\x,\y_1)-\nabla_{\x} f(\x,\y_2)\|_2&\leq L_{21}\|\y_1-\y_2\|_2,\\
\|\nabla_{\y} f(\x,\y_1)-\nabla_{\y} f(\x,\y_2)\|_2&\leq L_y\|\y_1-\y_2\|_2,\\
\|\nabla_{\y} f(\x_1,\y)-\nabla_{\y} f(\x_2,\y)\|_2&\leq L_{12}\|\x_1-\x_2\|_2.\\
\end{aligned}
\end{equation*}
\end{assumption}
\begin{assumption}\label{assumptionong}
The function $g(\x)$ is Lipschitz continuous, i.e., {\color{black}there exists $G>0$ such that}
$$|g(\x_1)-g(\x_2)|\leq G\|\x_1-\x_2\|_2, \quad \forall~ {\color{black}\x_1,\x_2\in\mathbb{R}^n}.$$
\end{assumption}

With the above definition and assumptions, we are ready to present the convergence result of the proposed APGDA algorithm. %The proof of the following theorem can be found in the {\color{black} full version of} this paper \cite{techreport}.
\vspace{-0.3cm}
\begin{theorem}\label{convergethe}
Suppose that Assumptions 1 and 2 hold. Let $\{(\x_k,\y_k)\}$ be the sequence generated by Algorithm \ref{algorithm1} with $\rho_k=\rho$.
If  $0<\rho\leq \frac{2}{L_y+2\beta_1}$, $c_k=\frac{\beta_1}{(k+1)^\gamma}$ with $0<\gamma\leq 0.5$, $\beta_1>0$, and  $\tau_k=\frac{16\beta_2 L_{12}^2}{\rho c_k^2}+\beta_3$ with $\beta_2>1$, $\beta_3\geq\rho L_{12}^2+L_x$, then any limit point of $\{(\x_k,\y_k)\}$ is a stationary point of problem \eqref{ourprob}.
  \begin{comment}
   at most
  %$T(\epsilon)$ iterations,
$$T(\epsilon)=\left(\frac{32d_3d_4\rho(\alpha-2)L_{12}^2}{\epsilon^2}+k_0-1\right)^2+2$$ iterations, where  $\sigma_y=\max\{\|\y\|_2~|~\y\in \mathcal{Y}\},$
$$k_0=\max\left\{\left(\frac{\sigma_y}{\rho\epsilon}\right)^4 ,9\right\},$$
$$d_3=\Phi_{k_0}-\underline{F}+\left(9+4\left(1+\frac{1}{k_0-2}\right)^{\frac{1}{4}}+\frac{(k_0-1)^{-\frac{1}{4}}}{4}\right)\frac{\sigma_y^2}{\rho},$$
$$d_4=\max\left\{2d_1,\frac{5L_{12}^2+5(L_y+\frac{1}{\rho})^2}{4\sqrt{k_0}(\alpha-2)L_{12}^2}\right\},$$
with $\Phi_k$ being defined in \eqref{Phi},
$\underline{F}=\min_{(\x,\y)\in \mathcal{X}\times\mathcal{Y}}F(\x,\y)$ and
$$d_1=\left(\frac{2\alpha}{\alpha-2}+\frac{\rho_0L_{12}^2+2L_x}{16\rho_0(\alpha-2)L_{12}^2}\right)^2.$$
\end{comment}
\end{theorem}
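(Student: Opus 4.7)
The plan is to follow the general template for analyzing nonconvex--concave min--max algorithms developed in \cite{HiBSA} and \cite{xu2020unified}, but extend it to accommodate the nonsmooth convex term $g$ in problem \eqref{ourprob}. The overall strategy is to build a Lyapunov (merit) function whose one-step behaviour, together with the compactness of $\mathcal{X}\times\mathcal{Y}$ and the vanishing of $c_k$, forces any limit point of $\{(\x_k,\y_k)\}$ to satisfy the inclusions in Definition \ref{stationarydef}.

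First, I would extract a one-step bound on $\|\x_{k+1}-\x_k\|$ purely from the optimality of $\x_{k+1}$ in \eqref{updatexx}. Testing the objective at the feasible candidate $\x=\x_k$ gives
\begin{equation*}
\tfrac{\tau_k}{2}\|\x_{k+1}-\x_k\|_2^2 \;\leq\; \langle \nabla_{\x} f(\x_k,\y_k),\,\x_k-\x_{k+1}\rangle + g(\x_{k+1})-g(\x_k).
\end{equation*}
Combining Assumption \ref{assumptionong} with the boundedness of $\nabla_{\x} f$ on the compact set $\mathcal{X}\times\mathcal{Y}$ yields $\|\x_{k+1}-\x_k\|_2=O(1/\tau_k)=O(c_k^2)\to 0$. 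This is precisely the mechanism through which the growing $\tau_k\sim c_k^{-2}$ keeps $\x$ from moving too fast relative to the timescale on which the perturbation $\tfrac{c_k}{2}\|\y\|_2^2$ decays.

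Next I would analyse the $\y$-update. Because $\tilde F(\x_{k+1},\cdot)$ is $c_k$-strongly concave with an $(L_y+c_k)$-Lipschitz gradient and the step size $\rho$ satisfies $\rho(L_y+2\beta_1)\leq 2$, a standard projected-gradient argument gives an ascent-type inequality of the form
\begin{equation*}
\tilde F(\x_{k+1},\y_{k+1})\;\geq\; \tilde F(\x_{k+1},\y_k) + \Bigl(\tfrac{1}{2\rho}-\tfrac{L_y+c_k}{2}\Bigr)\|\y_{k+1}-\y_k\|_2^2.
\end{equation*}
Pairing this with a descent bound for the $\x$-step (using Lipschitz smoothness of $f$ in $\x$, the quadratic regularizer $\tau_k$, and Assumption \ref{assumptionong}), I would construct a merit function of the shape
\begin{equation*}
\Phi_k \;=\; -\tilde F(\x_k,\y_k) + \theta\bigl(\max_{\y\in\mathcal{Y}}\tilde F(\x_k,\y)-\tilde F(\x_k,\y_k)\bigr)
\end{equation*}
for a suitable $\theta>0$, so that $\Phi_k-\Phi_{k+1}$ is bounded below by a positive combination of $\|\x_{k+1}-\x_k\|_2^2$ and $\|\y_{k+1}-\y_k\|_2^2$ plus a summable error driven by $c_k\to 0$. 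Since $\Phi_k$ is bounded below on the compact domain, telescoping yields $\|\x_{k+1}-\x_k\|_2\to 0$, $\|\y_{k+1}-\y_k\|_2\to 0$, and the inner optimality gap $\max_\y \tilde F(\x_k,\y)-\tilde F(\x_k,\y_k)\to 0$.

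Finally I would pass to the limit along an arbitrary convergent subsequence $(\x_{k_j},\y_{k_j})\to(\hat\x,\hat\y)$. The optimality of $\x_{k_j+1}$ in the DC subproblem provides a subgradient $\boldsymbol{\xi}_{k_j+1}\in\partial g(\x_{k_j+1})$ with $\nabla_{\x} f(\x_{k_j},\y_{k_j})-\boldsymbol{\xi}_{k_j+1}+\tau_{k_j}(\x_{k_j+1}-\x_{k_j})\in -\partial \mathbb{I}_{\mathcal{X}}(\x_{k_j+1})$; Assumption \ref{assumptionong} keeps $\{\boldsymbol{\xi}_{k_j+1}\}$ in a compact set, and upper semicontinuity of $\partial g$ (which holds since $g$ is convex) together with $\tau_{k_j}\|\x_{k_j+1}-\x_{k_j}\|_2=O(c_{k_j}^{-2}\cdot c_{k_j}^2)=O(1)$ --- refined to $\to 0$ using the finer $\|\x_{k+1}-\x_k\|_2=O(1/\tau_k)$ bound --- delivers the first inclusion in Definition \ref{stationarydef}. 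The $\y$-optimality condition, together with $c_k\to 0$, $\|\y_{k+1}-\y_k\|_2\to 0$, and continuity of $\nabla_{\y} f$, closes the second inclusion.

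The main obstacle is the construction of $\Phi_k$: its descent must simultaneously absorb (i) the DC-type slack coming from the nonsmooth concave term $-g(\x)$ in the $\x$-subproblem and (ii) the moving-target bias caused by $\arg\max_{\y}\tilde F(\x,\y)$ shifting as $\x$ and $c_k$ change. This is exactly where the coupling $\tau_k\sim L_{12}^2/(\rho c_k^2)$ enters: it makes the $O(c_k^2)$ step of $\x$ small enough to dominate the cross term $L_{12}\|\x_{k+1}-\x_k\|_2\cdot\|\y_{k+1}-\y^*_k\|_2$ and to keep the coefficient $\beta_3\geq \rho L_{12}^2+L_x$ feasible in the required descent inequality. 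Getting these constants to balance is the delicate part of the argument; everything else reduces to textbook subsequence/limit arguments.
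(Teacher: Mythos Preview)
The paper does not actually prove this theorem; it defers the argument to the companion technical report \cite{techreport}. Your overall template---build a perturbed potential that couples $-\tilde F$ with a dual-gap term in the spirit of \cite{HiBSA,xu2020unified}, telescope, then pass to limits---is almost certainly the intended route, and the commented-out complexity constants in the theorem (which reference a function $\Phi_k$) confirm that a potential-function argument is what the authors have in mind.

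That said, there is a real gap at the most delicate step. To verify the $\x$-inclusion in Definition~\ref{stationarydef} at a limit point you pass to the limit in the optimality condition of \eqref{updatexx}, which carries the term $\tau_k(\x_{k+1}-\x_k)$. From the comparison-at-$\x_k$ bound you correctly get $\|\x_{k+1}-\x_k\|_2=O(1/\tau_k)$, hence $\tau_k\|\x_{k+1}-\x_k\|_2=O(1)$. You then assert this can be ``refined to $\to 0$ using the finer $\|\x_{k+1}-\x_k\|_2=O(1/\tau_k)$ bound''---but that is the \emph{same} bound you already used; the reasoning is circular and yields nothing beyond $O(1)$. Nor does your Lyapunov step fill the hole: a descent of the form $\Phi_k-\Phi_{k+1}\ge a\|\x_{k+1}-\x_k\|_2^2+\ldots$ with constant $a$ gives only $\|\x_{k+1}-\x_k\|_2\to 0$, and even if the coefficient is $\tau_k$ (which the descent lemma naturally produces), summability of $\tau_k\|\x_{k+1}-\x_k\|_2^2$ yields merely $\sqrt{\tau_k}\,\|\x_{k+1}-\x_k\|_2\to 0$, not $\tau_k\|\x_{k+1}-\x_k\|_2\to 0$.

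This is precisely where the hard work lies: the potential $\Phi_k$ must be built so that the telescoped quantity controls the full stationarity residual (including $\tau_k(\x_{k+1}-\x_k)$) directly, with the coupling $\tau_k\sim L_{12}^2/(\rho c_k^2)$ and the floor $\beta_3\ge \rho L_{12}^2+L_x$ doing the bookkeeping. Treating the Lyapunov as producing only $\|\x_{k+1}-\x_k\|\to 0$ and then trying to recover $\tau_k(\x_{k+1}-\x_k)\to 0$ afterwards, as your sketch does, will not close the argument; you should revisit the construction of $\Phi_k$ so that this residual is what the telescoping bounds.
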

\begin{proof}
See Appendix \ref{AppendixD}.
\end{proof}
\vspace{-0.45cm}
\subsection{APGDA Algorithm for Solving ($\widehat{\text{P}}_\lambda$)}\label{AMPlambda}
In this subsection, we specialize the proposed APGDA algorithm to problem ($\widehat{\text{P}}_\lambda$) and carefully investigate its behaviors on this special problem, including implementation details (see Algorithm \ref{algonp}) and convergence results. We also propose a low-complexity implementation of  Algorithm \ref{algonp} to further reduce the computational cost. %Finally, we make a comparison between the proposed algorithm and the LP relaxation based algorithms from the algorithmic side.
\subsubsection{Implementation Details}\label{compare3}
Specializing Algorithm \ref{algorithm1} to problem ($\widehat{\text{P}}_\lambda$),  the subproblems of $\x$ and $\y$ become
\begin{equation}\label{updatex}
\x_{k+1}\in\arg\min_{\x\in[-1,1]^n}\y_k^\mathsf{T}\A\x-\lambda\|\x\|_1+\frac{\tau_k}{2}\|\x-\x_k\|_2^2
\end{equation}
 and
 \begin{equation}\label{updatey}\y_{k+1}=\text{Proj}_{\Delta}\left(\y_k+{\rho_k}\A\x_{k+1}-{\rho_k}c_k\y_k\right).\end{equation}
The $\x$-subproblem \eqref{updatex} is separable and has a closed-form solution. More specifically, by denoting
$\A=[\A_1,\A_2,\dots, \A_{n}]$ {\color{black} with $\A_i$ representing the $i$-th column of $\A$}, the subproblem \eqref{updatex} decouples into $n$  of problems in the following form:
\begin{equation}\label{distributedform}
\begin{aligned}
\x_{k+1}(i)\in\arg\min_{-1\leq x \leq 1}(\A_i^\mathsf{T}{\y}_k)x-\lambda |x|+\frac{\tau_k}{2}(x-\x_k(i))^2,~ i=1,2,\ldots,n,
\end{aligned}
\end{equation}
which admits a closed-form solution as
%\begin{equation}\label{solutionx}
%\x_{k+1}(i)=\left\{
%\begin{aligned}
%\max&\{a_k^i+\frac{b_k}{2},-1\},\quad\text{if }a_k^i<0;\\
%\min&\{a_k^i-\frac{b_k}{2} ,~~1\},\quad\text{if }a_k^i\geq0,
%\end{aligned}\right.
%\end{equation}
\begin{equation}\label{solutionx}
\x_{k+1}(i)=\text{sgn}(a_k^i)\min\left\{|a_k^i|+\frac{\lambda}{\tau_k},1\right\},~i=1,2,\dots,n,
\end{equation}
where $a_k^i=\x_{k}(i)-\frac{\A_i^\mathsf{T}\y_k}{\tau_k}$. A detailed derivation of \eqref{solutionx} is given in Appendix \ref{AppendixE}. Note that when $a_k^i=0,$ the solution of \eqref{distributedform} is not unique and we only need to choose one from the solution set. Here we choose $\x_{k+1}(i)=\min\left\{\frac{\lambda}{\tau_k},1\right\}$, and thus the solution of \eqref{distributedform}  can be expressed in a unified way as \eqref{solutionx}.
The solution of $\y$-subproblem \eqref{updatey} involves only matrix-vector multiplications and a projection onto the  simplex, which has a very fast implementation \cite{projection2}\cite{projection}.
We summarize the specialization of  the APGDA algorithm for solving problem ($\widehat{\text{P}}_\lambda$) as Algorithm
\ref{algonp}.
\begin{algorithm}
\small
\caption{APGDA Algorithm for Solving Problem ($\widehat{\text{P}}_\lambda$)}
\begin{algorithmic}[1]\label{algonp}
	\STATE \textbf{Input:} $\x_0,~\y_0,~\{\tau_k\},~\{\rho_k\},~\{c_k\}$.
		\STATE \textbf{Initialize:} $k=0$.
\REPEAT
\STATE Alternately update $\x_{k+1}$ and $\y_{k+1}$ as in \eqref{solutionx} and \eqref{updatey}.
\STATE Set $k=k+1$.
\UNTIL some stopping criterion is satisfied.
\STATE {\color{black}\textbf{Output:} $\x_{k}$}.
\end{algorithmic}
\end{algorithm}

 In total, the dominant complexity of Algorithm \ref{algonp} at each iteration lies in  calculating $\mathbf{A}^\mathsf{T}\mathbf{y}$ and $\mathbf{A}\mathbf{x}$, which requires $2mn$ real-number multiplications, and computing one projection onto the simplex of dimension $m$, whose computational complexity is $\mathcal{O}(m\log m)${\color{black}\cite{projection2}\cite{projection}.}
 Recalling that $n=2N_t$ and $m=2K$.   That is to say, for a system with $N_t$ transmit antennas and $K$ users, the per-iteration complexity of the proposed  APGDA algorithm is $\mathcal{O}(N_tK+K\log K)$. %In comparison, the per-iteration complexity of the interior-point algorithm for solving the LP relaxation model is $\mathcal{O}((N_t+K)^3)$ \cite{CImodel}. 
%This means that even we ignore the second stage of the LP relaxation based approaches, their complexity already grows cubically with the number of  antennas. Therefore, our proposed approach is  more computationally efficient than the LP relaxation based approaches, especially when the number of  antennas is large, as will be shown in the simulation.  

\subsubsection{Convergence Behavior}
According to Theorem \ref{convergethe}, the APGDA algorithm (with properly chosen parameters) is guaranteed to find a stationary point of problem ($\widehat{\text{P}}_\lambda$), whose corresponding $\x$-part is also a stationary point of problem (P$_\lambda$) due to the equivalence between problems (P$_\lambda$) and ($\widehat{\text{P}}_\lambda$). The remaining question is whether the obtained stationary point satisfies the one-bit constraint. This is a crucial question, since if the obtained solution does not satisfy the one-bit constraint,  we have to further consider how to determine the values of those infeasible elements to obtain a high-quality one-bit solution. In fact, this is the problem faced in the second stage of the LP relaxation based approaches, which is independently  very difficult. Fortunately, by carefully exploiting the  special structures of problem (P$_\lambda$) and  Algorithm \ref{algonp}, we can give an affirmative answer to the above question, i.e., the obtained solution already satisfies the one-bit constraint. %If it does not,  we have to consider how to determine the values of the infeasible elements to obtain a high-quality one-bit solution, which leads to a dilemma: determining them greedily can lead to a (serious) performance degradation while making an overall assignment will result in a high computational complexity (In fact, this is the problem faced in the LP relaxation based approaches, as discussed in Section \ref{compare1}). Fortunately, by carefully exploiting the  special structures of problem (P$_\lambda$) and  Algorithm \ref{algonp}, we can give an affirmative answer to the above question, i.e., the obtained solution already satisfies the one-bit constriant.}
 To show this, we first characterize the stationary points of (P$_\lambda$).
\begin{lemma}\label{criticalthe}
If $\lambda>\max_l\|\mathbf{a}_l\|_\infty,$ all stationary points $\hat{\x}$ of (P$_\lambda$) must satisfy
\begin{equation*}
\hat{x}_i\in\left\{-1,1,0\right\},~i=1,2,\dots,n.
\end{equation*}
\end{lemma}
\begin{proof}
See Appendix \ref{AppendixF}.
\end{proof}
%The proof of Lemma \ref{criticalthe} is provided in the {\color{black}full version} of this paper \cite{techreport}. 
Theorem \ref{convergethe} and Lemma \ref{criticalthe} suggest that every limit point $\hat{\x}$ of the sequence generated by Algorithm \ref{algonp} must have all of its elements being either $\pm 1$ or $0$.
Obviously zero elements here do not satisfy the one-bit constraint in problem (P) and thus are undesirable. Fortunately, the following Theorem \ref{c1} shows that zero elements will not happen in Algorithm \ref{algonp}. Note that for problem ($\widehat{\text{P}}_\lambda$), $L_x = L_y = 0, L_{12} = L_{21}= \|\A\|_2$. The following theorem is a combination of results in Theorem \ref{convergethe}, Lemma \ref{criticalthe}, and the closed-form solution \eqref{solutionx}. %The detailed proof is relegated to Appendix \ref{appendixD}.
\begin{theorem}\label{c1}
Let $\{(\x_k,\y_k)\}$ be the sequence generated by Algorithm \ref{algonp} with $\rho_k=\rho, c_k=\frac{\beta_1}{ (k+1)^{\gamma}}$, and $\tau_k=\frac{16\beta_2\|\A\|_2^2}{\rho c_k^2}+\beta_3$, where $0<\rho\leq\frac{1}{\beta_1}$, $0<\gamma\leq 0.5$, $\beta_1>0$, $\beta_2>1$, and $\beta_3\geq\rho\|\A\|_2^2$. Then if $\lambda>\max_l\|\mathbf{a}_l\|_\infty,$ every limit point $\hat{\x}$ of $\{\x_k\}$ must satisfy the one-bit constraint.
\end{theorem}
\begin{proof}
See Appendix \ref{AppendixG}.
\end{proof}
In summary, when the penalty parameter $\lambda$ in problem (P$_\lambda$) is  sufficiently large, every limit point $\hat{\x}$ of the sequence generated by Algorithm \ref{algonp} (with properly selected parameters) is not only a stationary point of (P$_\lambda$) but also a feasible point of (P) and thus a local minimizer (according to Theorem \ref{equivalence}) of problem (P$_\lambda$). This nice property is a result of the combination of nice properties of problem (P$_\lambda$) and Algorithm \ref{algonp}. {\color{black} In comparison, the GEMM algorithm proposed in \cite{GEMM} is not guaranteed to terminate at a feasible point of (P), which shows the theoretical superiority of our proposed algorithm.}
\subsubsection{A Low-Complexity Implementation of Algorithm \ref{algonp}}
To further reduce the computational cost, in this part we propose a  low-complexity implementation of Algorithm \ref{algonp}. To be specific, we consider performing Algorithm \ref{algonp} in a more aggressive manner by keeping the values of variables fixed in later iterations once they satisfy the one-bit constraint. For clarity, we summarize the above procedure in Algorithm \ref{anl1pinner}.
\begin{algorithm}
\caption{\small A Low-Complexity Implementation of Algorithm \ref{algonp}}
\label{anl1pinner}
\begin{algorithmic}[1]
\small
	\STATE \textbf{Input:} $\x_0,~\y_0,~\{\tau_k\},~\{\rho_k\},~\{c_k\}$.
		\STATE \textbf{Initialize:} $k=0$,  $\mathcal{S}=\left\{i\in\{1,2,\dots,n\}\mid |\x_0(i)|<1\right\}$.
\REPEAT

\STATE  Update $\x_{k+1}$ as
\begin{equation*}
\x_{k+1}(i)=\left\{
\begin{aligned}
\text{sgn}(a_k^i)\min\left\{|a_k^i|+\frac{\lambda}{\tau_k},1\right\},~&\text{if }i \in \mathcal{S};\\
\x_{k}(i),\hspace{2cm}~&\text{otherwise},
\end{aligned}
\right.
\end{equation*} where
$a_k^i=\x_{k}(i)-\frac{\A_i^\mathsf{T}\y_k}{\tau_k}$.
\STATE  Update $\mathcal{S}$ as $\mathcal{S}=\left\{i\in\left\{1,2,\dots,n\right\}\mid |\x_{k+1}(i)|<1\right\}$.
\STATE  Update $\y_{k+1}$ as in \eqref{updatey}.
\STATE Set $k=k+1$.
\UNTIL some stopping criterion is satisfied.
\STATE {\color{black}\textbf{Output:} $\x_{k}$.}
\end{algorithmic}
\end{algorithm}

 If Algorithm \ref{anl1pinner} is employed to solve the subproblem (P$_\lambda$) in Algorithm 1,
 then the number of elements in $\x$ that need to be updated will gradually decrease as the algorithm proceeds.
 Therefore, replacing Algorithm \ref{algonp} with Algorithm \ref{anl1pinner} to solve the subproblem (P$_\lambda$) can accelerate the convergence of the NL1P approach. We name the corresponding algorithm as the accelerated negative $\ell_1$ penalty (ANL1P) approach. It will be shown in the simulation that  ANL1P can achieve almost the same performance as NL1P with less CPU time.

\section{Simulation Results}\label{simulation}
In this section, we present simulation results to demonstrate the performance of our proposed algorithms.
\vspace{-0.4cm}
\subsection{Simulation Setup and Choice of Parameters}
 We consider multiuser massive MIMO systems where the BS is equipped with hundreds of antennas.
We assume standard Rayleigh fading channel, i.e., the channel matrix $\H$ is composed of independent and identically distributed standard complex Gaussian random variables. %  Gaussian random variables with zero mean and unit variance.
We set the length of the transmission block to be $T=10$ and define the transmit SNR to be $\frac{1}{\sigma^2}$, where the unit transmit power is assumed. %High-order PSK modulations, including $8$-PSK and $16$-PSK, are considered.
All the results are obtained with Monte Carlo simulations of 1000 independent channel realizations. %All the algorithms are implemented in MATLAB (Release 2018b) in OS X 10.14 on a MacBook Pro with a 2.4-GHz Intel Core i5 processor with access to 8 GB of RAM.
{\color{black}Throughout this section, we use the triple $(K,N_t,M)$ to describe the considered system, where $K$ and $N_t$ denote the numbers of users and transmit antennas in the system, respectively, and $M$ refers to the constellation level for PSK modulation.}

%As in Section \ref{background}, we use the triple $(K,N_t,M)$ to describe the considered system, where $K$ denotes the total number of users in the system, $N_t$ is the number of transmit antennas at the BS, and $M$ refers to the constellation level for PSK modulation.

\begin{comment}
 including
\begin{enumerate}
\item `ZF Inf-Bit': The unquantized ZF precoder with infinite-precision DACs;
\item 'ZF 1-Bit': The quantized ZF precoder \cite{SQUID};
\item 'MMSE 1-Bit': The quantized MMSE precoder \cite{SQUID};
\item `C1PO 1-Bit': The nonlinear MSE-based `C1PO' precoder  \cite{C3PO1};
\item `C2PO 1-Bit': The nonlinear MSE-based `C2PO' precoder  \cite{C3PO1};
\item `SQUID 1-Bit': The nonlinear MSE-based `SQUID' precoder\cite{SQUID};
\item `CI 1-Bit Round': The nonlinear CI-based precoder obtained by quantizing the linear relaxation solution \cite{CIfirst};
\item `CI 1-Bit OPSU': The nonlinear CI-based `OPSU' precoder  \cite{PBB};
\item `CI 1-Bit NL1P': The proposed nonlinear CI-based `NL1P' precoder;
\item `CI 1-Bit ANL1P': The proposed accelerated version of NL1P;
\item `CI 1-Bit P-BB':  The nonlinear CI-based `P-BB' precoder  \cite{PBB}.
\end{enumerate}
\end{comment}

%as the algorithm progresses. The quantized solution with the best function value is recorded as the algorithm progresses.

We compare the proposed NL1P and ANL1P approaches with existing state-of-the-art linear and nonlinear one-bit precoding approaches listed in Table \ref{Compared algorithms}. {\color{black}
The GMSM algorithm in \cite{GMSM} exhibits a similar rationale and performance to the OPSU algorithm in \cite{PBB} and thus is omitted in the simulations for conciseness of presentation. In addition to the algorithms in Table \ref{Compared algorithms}, we also modify and apply the GEMM algorithm proposed in \cite{GEMM}, which is originally designed for QAM modulation, to solve our considered model, and we term it as ``CI 1-Bit GEMM''. }Moreover,
 We also include the unquantized ZF precoder, termed as ``Inf-Bit ZF'',  as the performance upperbound of the one-bit precoding approaches.

The parameters used in our algorithms are as follows. In Algorithm \ref{nl1p}, the initial point is {\color{black}set to}  $\x^{(0)}=\mathbf{0}$; the penalty parameter is initialized as $\lambda^{(0)}=\frac{0.001M}{8}$ and increased by a factor of $\delta=5$ at each iteration. In Algorithm \ref{algonp} (Algorithm \ref{anl1pinner}), we set the initial point of $\y$ as $\y_0=\frac{1}{2K}\mathbf{1}$ and the other parameters  as $\rho_k=\rho=\frac{0.2}{\|\A\|_2},~c_k=\frac{0.01}{\rho (k{\color{black}+1})^{0.05}},$ and $ \tau_k=1.2\text{mean}\left(|\A|\right)({\color{black}k+1})^{0.1}$. We terminate Algorithm \ref{algonp}  (Algorithm \ref{anl1pinner}) for solving the subproblem in Algorithm \ref{nl1p} when its iteration number is more than $500$ or when the distance of its successive iterates is less than $10^{-3}$. % {\color{black}(this  technique is also applied in the implementation of  GEMM)}.
\vspace{-0.2cm}
\subsection{BER Performance}

  \begin{figure}[!t]
\centering
\includegraphics[scale=0.45]{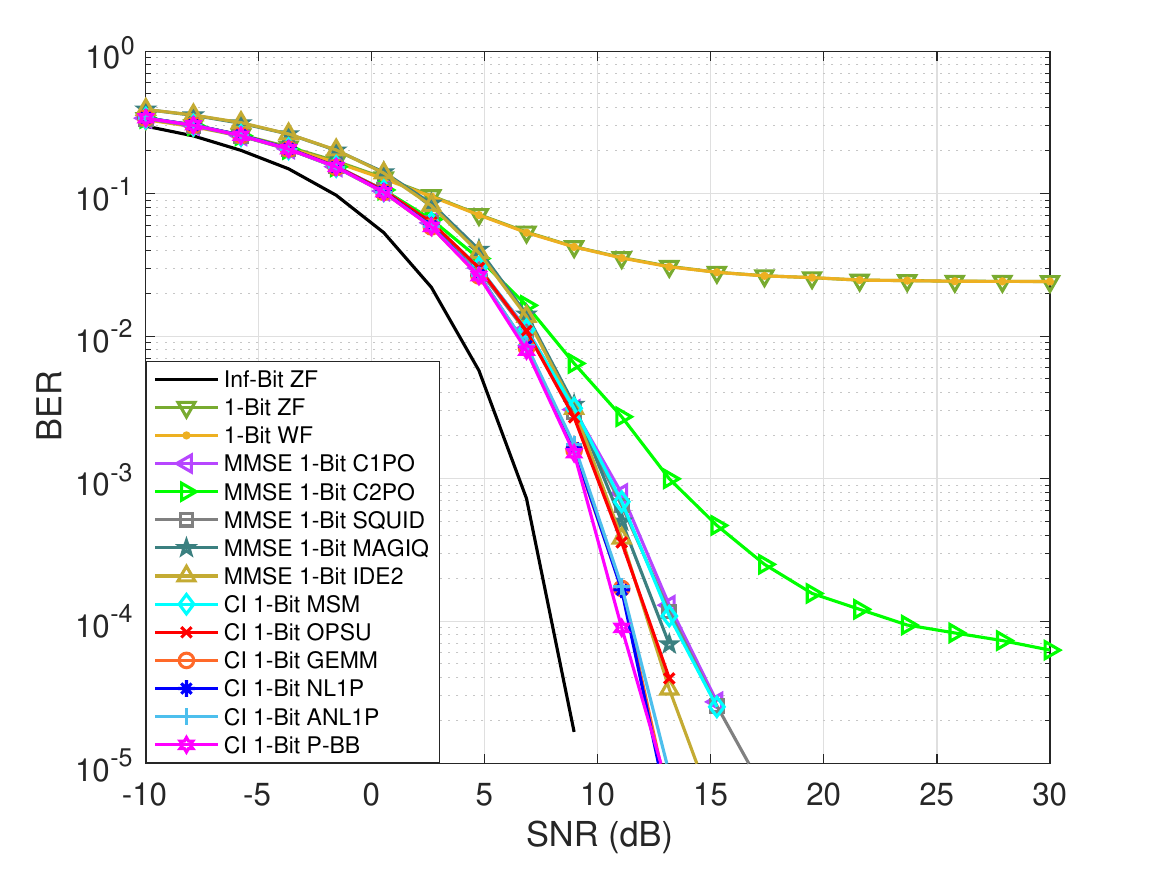}
\vspace{-10pt}
\caption{BER performance versus SNR, {\color{black} where $(K,N_t,M)=(16,128,8)$}.}
\label{16_128_8}
\vspace{-0.3cm}
\end{figure}
We first present the bit error rate (BER) results for different multiuser massive MIMO systems.
 In Fig. \ref{16_128_8}, a $16\times 128$ system with $8$-PSK modulation is considered. It can be observed that linear precoding suffers a BER floor in the high SNR regime due to the coarse one-bit quantization,  while all of the nonlinear approaches exhibit significantly better BER performance. Of the presented nonlinear precoding schemes, the CI-based methods generally perform better than the  MMSE-based methods, among which the P-BB algorithm achieves the best BER performance. However, since a branch-and-bound process is included, the P-BB algorithm is computationally inefficient, % (especially when the number of users $K$ is large),
 as will be demonstrated in Section \ref{cputime}.  It can be observed from Fig. \ref{16_128_8} that all the CI-based approaches achieve comparable BER performance in this system, with the two proposed algorithms {\color{black} and the GEMM algorithm} showing  slightly better  performance than the state-of-the-art OPSU precoder.

\begin{figure*}[!t]
\subfigure[${\color{black}(K,N_t,M)=(32,128,8)}$.]{
\centering
\includegraphics[scale=0.402]{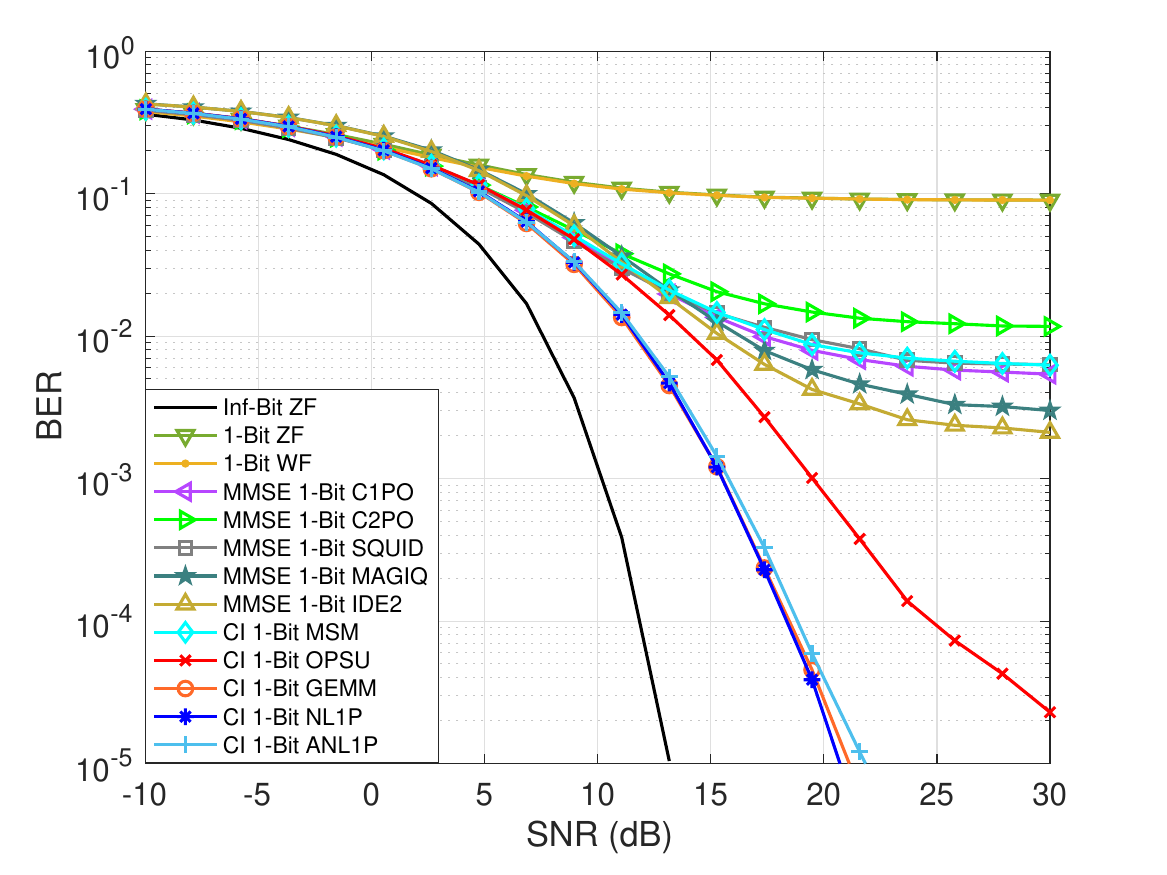}
\vspace{-10pt}
\label{32_128_8}}
\subfigure[${\color{black}(K,N_t,M)=(16,128,16)}$]{\includegraphics[scale=0.402]{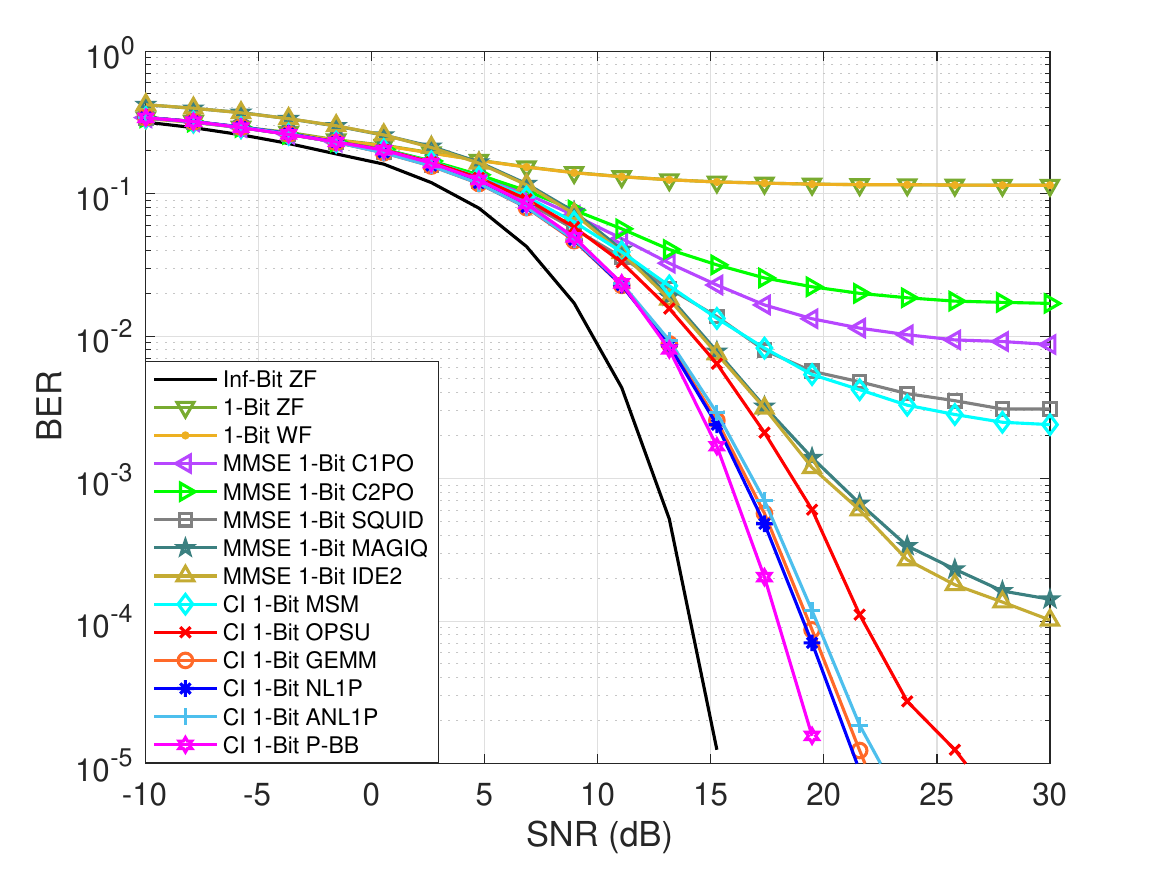}
\label{16_128_16}}
\vspace{-0.15cm}
\caption{{\color{black}BER performance versus SNR for high antenna-user ratio and high-level modulation}.}
\label{time}
\end{figure*}

\begin{figure}[!t]
\centering
\includegraphics[scale=0.45]{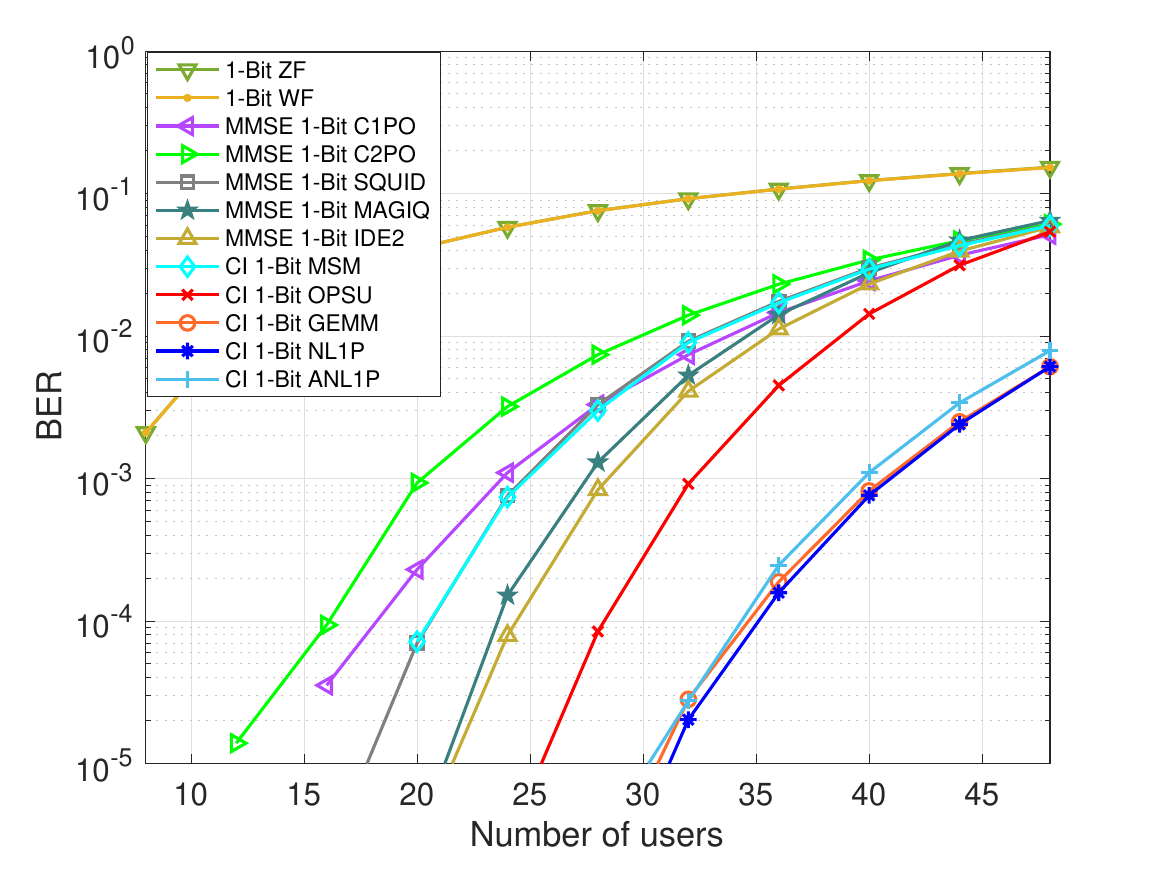}
\vspace{-15pt}
\caption{BER performance versus the number of users, where $N_t=128, M=8,$ and $\text{SNR}=20$.}
\label{fixk}
\end{figure}

In Fig. \ref{32_128_8} and Fig. \ref{16_128_16}, we investigate the more difficult cases, i.e., higher user-antenna ratio and higher-level modulation, respectively. More specifically, in Fig. \ref{32_128_8} we present the BER result for a $32\times 128$ system with $8$-PSK modulation and in Fig. \ref{16_128_16} we consider a $16\times 128$ system as in Fig. \ref{16_128_8} but with higher-order $16$-PSK modulation. The P-BB approach is not included in Fig. \ref{32_128_8} due to its prohibitively high complexity. Since the problem becomes more difficult in these two cases, it is not surprising to observe remarkable performance loss for all the precoding schemes.  In particular, only the CI-based OPSU, P-BB, {\color{black}GEMM,} and the two proposed  approaches can achieve satisfactory BER performance, while all the other approaches suffer from severe error rate floors at relatively high SNRs.   {\color{black} For both of these cases,  the two proposed algorithms and the GEMM algorithm show similar performance, and their performance advantage over the OPSU algorithm becomes more prominent than that shown in Fig. \ref{16_128_8}. }
In particular, we can observe an SNR gain up to nearly 6dB and 2.5dB  in Fig. \ref{32_128_8} and Fig. \ref{16_128_16} respectively when the BER is $10^{-4}$; as the BER becomes lower, the performance gain in terms of the SNR also becomes larger.

\begin{comment}
\begin{figure}[!t]
\centering
\includegraphics[scale=0.44]{32_128_8_0711-eps-converted-to}
\vspace{-10pt}
\caption{BER performance versus SNR {\color{black}where $(K,N_t,M)=(32,128,8)$}.}
\label{32_128_8}
\end{figure}
\begin{figure}[!t]
\centering
\includegraphics[scale=0.44]{16_128_16_0711-eps-converted-to}
\vspace{-10pt}
\caption{BER performance versus SNR, {\color{black}where $(K,N_t,M)=(16,128,16)$}.}
\vspace{-0.5cm}
\label{16_128_16}
\end{figure}
\end{comment}

In Fig. \ref{fixk}, we further depict the BER of the compared one-bit precoders versus the number of users, where the number of transmit antennas at the BS is fixed as $N_t=128$, the SNR is fixed as $20$, and $8$-PSK modulation is adopted.  Among all the presented precoders,  the proposed NL1P approach achieves the best BER performance, {\color{black}followed by the GEMM algorithm and then the proposed ANL1P approach, with only  slight performance losses observed}. {\color{black}All of these three precoders} exhibit significantly better performance than the other precoding schemes in the sense that with the same BER requirement, they can serve much  more users. For example, if we require the BER to be less than $10^{-3}$, these three precoders can serve about $40$ users, while the state-of-the-art OPSU precoder can serve only $32$ users.

\begin{figure*}[!t]
\centering
\includegraphics[scale=0.45]{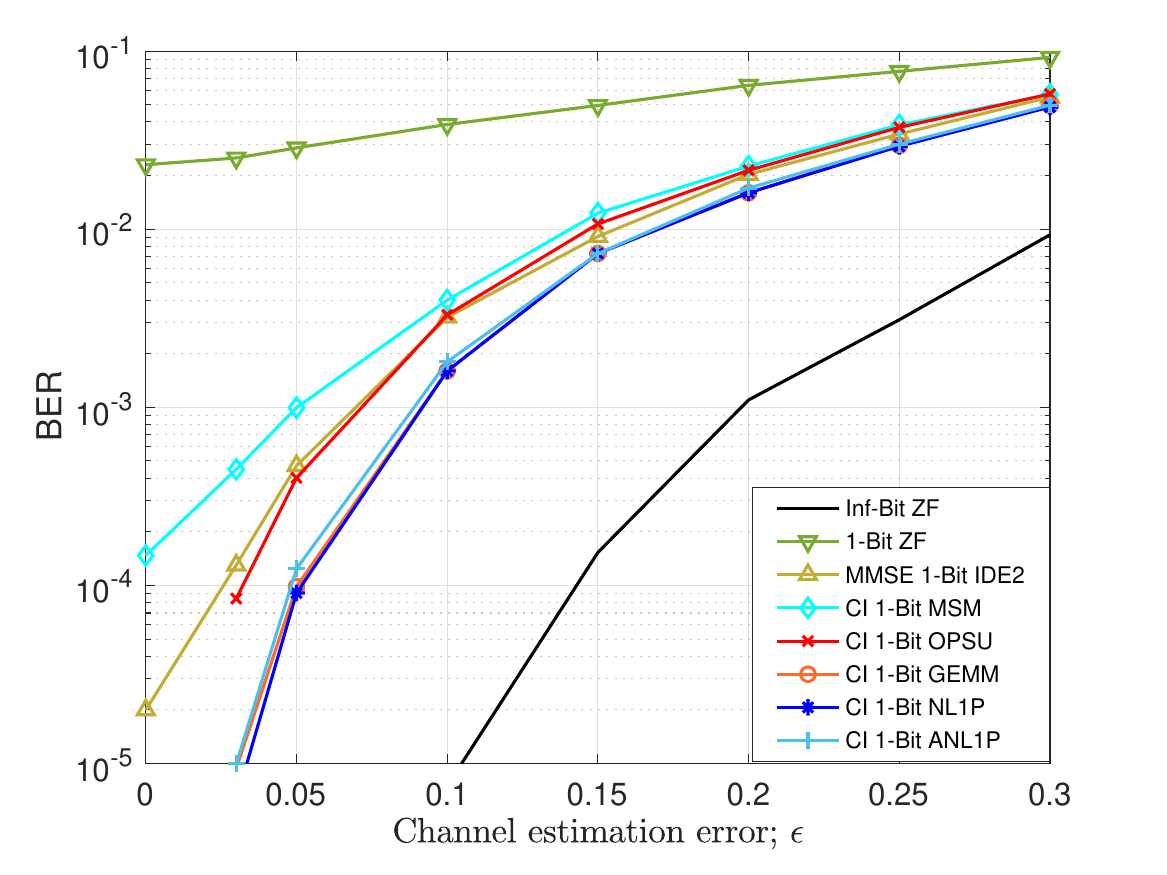}
\vspace{-20pt}
\caption{BER performance versus channel estimation error $\epsilon$, where $(K,N_t,M)=(32,128,4)$ and $\text{SNR}=15$.}
\centering
\label{CSIerror}
\vspace{-0.5cm}
\end{figure*}
{\color{black}In Fig. \ref{CSIerror}, we investigate the impact of the CSI error on the proposed approaches. The channel  accessible at the BS is modeled as 
$$\widehat{\H}=\sqrt{1-\epsilon}\H+\sqrt{\epsilon}\mathbf{Z},$$ where $\H$ is the true channel matrix, $\mathbf{Z}\sim\mathcal{CN}\left(\mathbf{0},\mathbf{I}\right)$ is independent of $\H$, and $\epsilon$ is the estimation error.  For a concise presentation, we only include the MMSE-based algorithm with the best BER performance, i.e., the IDE2 algorithm, in the figure. The P-BB approach is not included due to its prohibitively high complexity.
As can be observed, the proposed approaches still exhibit similar BER performance to GEMM and
superior performance to the others in this more practical case with CSI errors. }
\vspace{-0.48cm}
\subsection{Computational Efficiency}\label{cputime}
Now we evaluate the computational efficiency of the compared algorithms by reporting their CPU time. Since linear and MMSE-based approaches fail to achieve satisfactory BER performance in many cases, we are mostly interested in the CPU time comparison of the CI-based methods in this subsection.
In Fig. \ref{time_antenna} and Fig. \ref{time_user},  we compare the average CPU time (in seconds) of the {\color{black} CI-based} algorithms versus different numbers of transmit antennas and different numbers of users, respectively. We can make the following observations from Fig. \ref{time}. 
For the two proposed algorithms,  ANL1P runs faster than NL1P as expected.  Both of them are  generally more efficient than the LP relaxation based approaches, i.e., MSM, OPSU, and  P-BB, especially when the system scales up. More specifically, the computational costs of the MSM precoder and the OPSU precoder increase rapidly with the scale of the system, while that of our proposed approaches grow much slower. The P-BB algorithm, though with the best BER performance, is much more computationally expensive than the other methods. Its computational cost becomes prohibitively high when the number of users is large, since the complexity of the branch-and-bound procedure grows exponentially with respect to the number of users \cite{PBB}. {\color{black}In addition, the two proposed approaches are also much faster than GEMM. Specifically, the NL1P and ANL1P algorithms require approximately one-third and one-fifth of the CPU time, respectively, when compared to the GEMM algorithm. }%{\color{black}This is because the smoothing parameter in the GEMM algorithm must be very small to guarantee good BER performance (which is set to be $0.05$ in our implementation), and thus a  large number of iterations are needed  (as discussed in Section \ref{compare2}), which makes the GEMM  algorithm slow in practice.}
\begin{figure*}[!t]
\subfigure[CPU time versus the number of transmit antennas.]{
\centering
\includegraphics[scale=0.402]{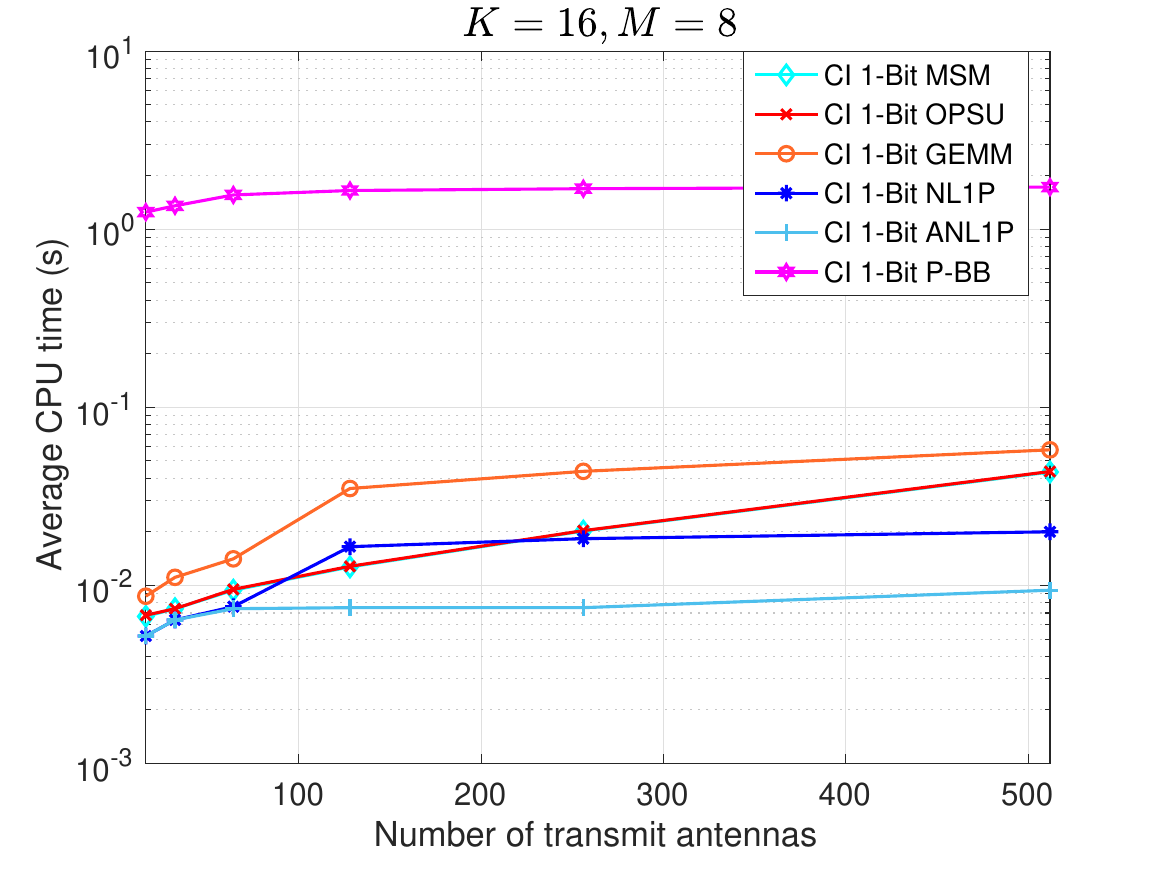}
\vspace{-10pt}
\label{time_antenna}}
\subfigure[CPU time versus the number of users.]{\includegraphics[scale=0.402]{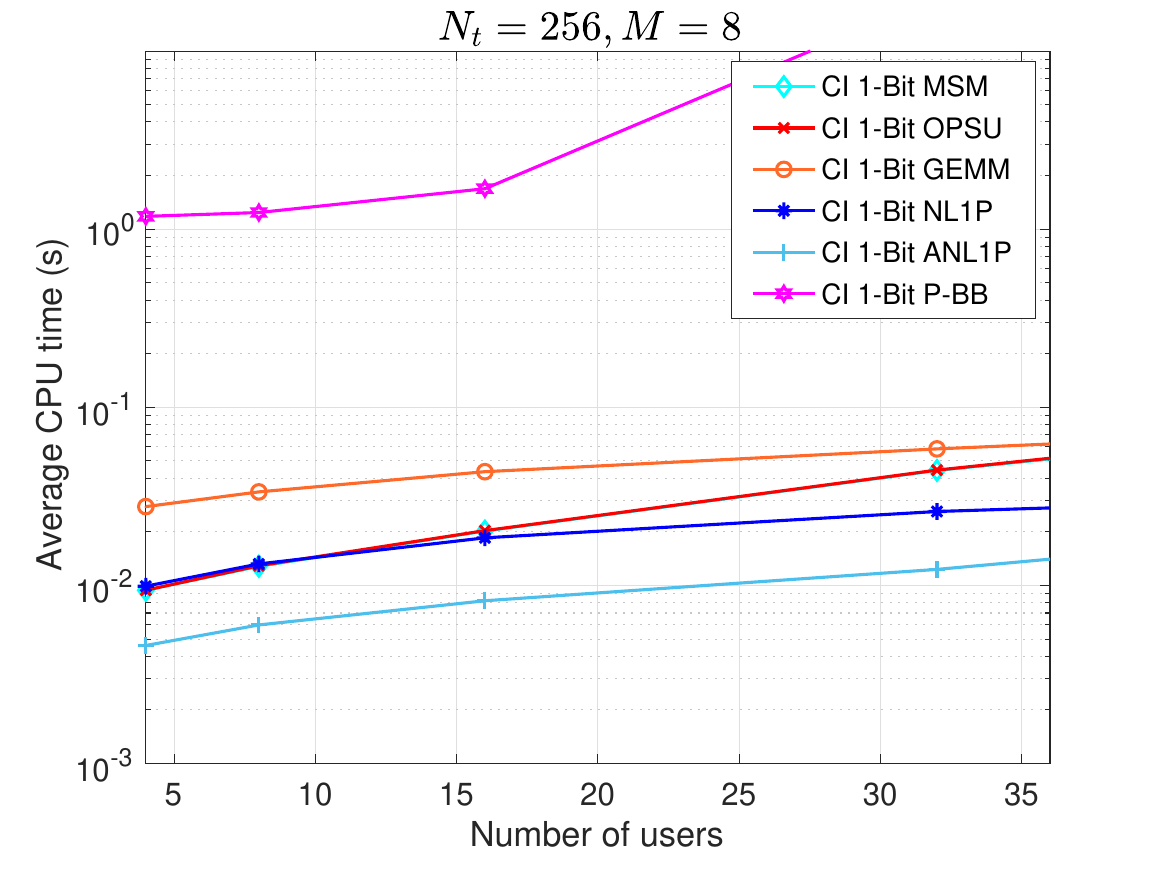}
\label{time_user}}
\vspace{-0.15cm}
\caption{CPU time versus the number of transmit antennas and the number of users.}
\label{time}
\end{figure*}

Now we can conclude this section by claiming that our proposed approaches outperform the state-of-the-art CI-based approaches in terms of both the BER performance and the computational efficiency.

\section{Conclusion}\label{conclusion}
In this paper, we considered the one-bit downlink transmission of a multiuser massive MIMO system with PSK signaling, where the CI metric is adopted. We first characterized the complexity status of the considered problem by establishing its (strong) NP-hardness. Then, we proposed a novel negative $\ell_1$ penalty (NL1P) approach for solving the considered problem, which is guaranteed to obtain a high-quality solution that satisfies the one-bit constraint. There are two main features of the NL1P approach: first, the approach is based on a novel negative $\ell_1$ penalty model, which is equivalent to the original problem both globally and locally when the penalty parameter is large enough; second, the APGDA algorithm proposed for solving the penalty model enjoys a low per-iteration complexity, which makes NL1P suitable for solving large-scale problems.  We also introduced a low-complexity implementation of the NL1P approach to further reduce its computational cost.
 Simulation results demonstrated that our approaches outperform the state-of-the-art CI-based approaches in terms of both the BER performance and the computational efficiency. %{\color{black} In a following-up study \cite{ICASSPQCE}, we extend our current work to the more general quantized constant-envelope (QCE) precoding cases. The penalty approach proposed therein also exhibits better performance than existing QCE precoding algorithms. 
% For the future work, we will consider the more complicated cases including }{\color{black}multi-antenna user scenario, and imperfect CSI scenario. }

% if have a single appendix:
%\appendix[Proof of the Zonklar Equations]
% or†
%\appendix  % for no appendix heading
% do not use \section anymore after \appendix, only \section*
% is possibly needed

% use appendices with more than one appendix
% then use \section to start each appendix
% you must declare a \section before using any
% \subsection or using \label (\appendices by itself
% starts a section numbered zero.)
%
\appendices
\vspace{-0.2cm}

\section{Proof of Theorem \ref{th1}}\label{AppendixA}
Notice that when $K=1$, (P$_0$) reduces to the following problem:
\begin{equation}\label{2partition1}
\begin{aligned}
\max_{\x_T}~&\min~\left\{\alpha^A,\alpha^B\right\}\\
\qquad\text{s.t. }~&\h^\mathsf{T}\x_T=\alpha^As^A+\alpha^Bs^B,\\
~&\x_T(i)\in\left\{\pm1 \pm j\right\},\quad i=1,2,\dots, N_t.
\end{aligned}
\end{equation}
Next we shall build a polynomial-time transformation from the partition problem \cite{np} to problem \eqref{2partition1}. The partition problem is to determine whether a given set of $N$ positive integers $\left\{a_1, a_2,\dots,a_N\right\}$ can be partitioned into two subsets such that the sum of elements in each subset is the same. 

Now we construct an instance of problem \eqref{2partition1} based on the given instance of the partition problem. Let the number of antennas at the BS be $N$ and the transmitted data symbol be $s=1$, which is drawn from the QPSK constellation set. In this case, $s^A=\frac{\sqrt{2}}{2}(1-j)$ and  $s^B=\frac{\sqrt{2}}{2}(1+j)$. Moreover, set the channel vector $\h$ to be $\h=\sqrt{2}\mathbf{a}$ with $\mathbf{a}=[a_1, a_2,\dots, a_N]^\mathsf{T}$. With the above constructed parameters, problem (\ref{2partition1}) can be expressed as
\begin{equation}\label{2partition2}
\begin{aligned}
\max_{\x_T}~&\min\left\{\alpha^A,\alpha^B\right\}\\
\text{s.t. }~&\left[\begin{matrix}
\alpha^A\\\alpha^B\end{matrix}\right]=\left[
\begin{matrix}
\mathbf{a}^\mathsf{T}&\mathbf{-a}^\mathsf{T}\\\mathbf{a}^\mathsf{T}&\mathbf{a}^\mathsf{T}
\end{matrix}\right]
\left[\begin{matrix}\RR(\x_T)\\\I(\x_T)\end{matrix}\right],\\
~&\x_T(i)\in\left\{\pm1\pm j\right\}, ~i=1,2,\dots,N.
\end{aligned}
\end{equation}
Let the optimal solution of problem \eqref{2partition2} be $\x_T^*.$ Since $\mathbf{a}>\mathbf{0}$, it is easy to argue that $\RR(\x_T^*)=\mathbf{1}$. By defining $\mathcal{S}=\left\{i\in\{1,2,\dots,N\}\mid\I(\x_T^*(i))=1\right\}$, it then follows that
$$\alpha^A=2\sum_{i\notin S}a_i,\quad \alpha^B=2\sum_{i\in S} a_i.$$
Now, it is straightforward to argue that the optimal value of our constructed problem (\ref{2partition2}) is $\sum_{i=1}^N a_i$ if and only if the partition problem has a ``yes'' answer. Finally, the above transformation can be done in polynomial time. Since the partition problem is NP-complete,  problem \eqref{2partition1} is NP-hard.

\section{Proof of Theorem \ref{th2}}\label{AppendixB}
The proof is based on a polynomial-time transformation from the 3-SAT problem \cite{np} to problem (P$_0$). The 3-SAT problem is to determine whether a given set of disjunctive clauses, each consisting of 3 Boolean variables,  is satisfiable. Given any instance of the 3-SAT problem consisting of $m$ disjunctive clauses $c_1,c_2,\dots, c_m$ defined on $n$ Boolean variables $x_1,x_2,\dots, x_n$, we construct below a problem instance of (P$_0$) with $K=m$ and $N_t=n+1$. 

We first express $c_k$ as $c_k=\alpha_{\pi(k)}\vee\beta_{\rho(k)}\vee\gamma_{\tau(k)}$ and define the channel vectors as {\color{black}$\h_k=e^{-\frac{j\pi}{4}}{\mathbf{g}}_k=e^{-\frac{j\pi}{4}}[{g_{k1}}, {g_{k2}},\dots,{g_{kn}},1]^\mathsf{T}\in \R^{n+1},~k=1,2,\dots,m$, with}
\begin{equation*}
g_{ki}=\left\{
\begin{aligned}
&1,\qquad\text{if}~\alpha_{\pi(k)}=x_i~ \text{or}~\beta_{\rho(k)}=x_i~\text{or}~\gamma_{\tau(k)}=x_i;\\
-&1,\qquad\text{if}~\alpha_{\pi(k)}=\bar{x}_i~ \text{or}~\beta_{\rho(k)}=\bar{x}_i~\text{or}~\gamma_{\tau(k)}=\bar{x}_i;\\
&0,\qquad\text{otherwise}, \hspace{5.8cm}
\end{aligned}
\right. i=1,2,\dots, n.
\end{equation*}
For example, if $c_k=x_1\vee\bar{x}_3\vee x_4,$ then $\mathbf{g}_k=[1,0,-1,1,0,\dots,0,1]$. Moreover, we set the modulation scheme to be {\color{black}QPSK and the data symbols for all users to be $$s_k=1,\quad k=1,2,\dots, m.$$ It follows immediately that $s_k^A=e^{-\frac{j\pi}{4}}$ and $ s_k^B=e^{\frac{j\pi}{4}}$ for all $k=1,2,\dots,m$. With the above constructed parameters and by multiplying $e^{\frac{j\pi}{4}}$ on both sides of the first constraint of (P$_0$)}, problem (P$_0$) becomes
\begin{equation}\label{construct1}
\begin{aligned}
\max_{\x_T}~&\min_{k\in\{1,2,\dots,m\}}~\{\alpha_k^A,\alpha_k^B\}\\
\text{s.t. }~&\mathbf{g}_k^\mathsf{T}\Re(\x_T)=\alpha_k^A,\quad k=1,2,\dots,m,\\
~&\mathbf{g}_k^\mathsf{T}\Im(\x_T)=\alpha_k^B,\quad k=1,2,\dots,m,\\
~&\x_T(i)\in\left\{\pm1\pm j\right\},\quad i=1,2,\dots,n+1,
\end{aligned}
\end{equation}
which is equivalent to 
\begin{equation}\label{nphard1}
\begin{aligned}
\max_{\y,t}~&t\\
\text{s.t. }~&\mathbf{g}_k^\mathsf{T}\y\geq t,\quad k=1,2,\dots, m,\\
~&y_i\in\left\{-1,1\right\},\quad i=1,2,\dots n+1.
\end{aligned}
\end{equation}
Let the optimal solution of \eqref{nphard1} be $\y^*$.  Since the last entries of all $\{\mathbf{g}_k\}_{k=1}^m$ are $1$, it is easy to argue that $y^*_{n+1}=1$. Based on this,  and by further defining 
$\hat{\mathbf{g}}_k=\mathbf{g}_k[1:n]$ for all $k=1,2,\dots,m$ and $z_i=(y_i+1)/2$ for all $i=1,2,\dots, n$, problem \eqref{nphard1} can be equivalently expressed as
\begin{equation}\label{nphard2}
\begin{aligned}
\max_{{\z},t}~&t\\
\text{s.t. }~&2\hat{\mathbf{g}}_k^\mathsf{T}\z-\mathbf{1}^\mathsf{T}\hat{\mathbf{g}}_k+1\geq t,\quad k=1,2,\dots, m,\\
~&z_i\in\left\{0,1\right\},\quad i=1,2,\dots n.
\end{aligned}
\end{equation} 

Now we claim that the 3-SAT problem is satisfied if and only if the optimal value of our constructed problem is greater than or equal to zero, or equivalently, there exists $\z\in\{0,1\}^n$ such that 
\begin{equation}\label{nphard3}
2\hat{\mathbf{g}}_k^\mathsf{T}\z-\mathbf{1}^\mathsf{T}\hat{\mathbf{g}}_k+1\geq0, \quad k=1,2,\dots, m.
\end{equation}
From the choice of ${\mathbf{g}}_k$, we know that $c_k$ is satisfied if and only if 
\begin{equation*}
\begin{aligned}
g_{k\pi(k)}x_{\pi(k)}+\frac{1-{g}_{k\pi(k)}}{2}+g_{k\rho(k)}x_{\rho(k)}+\frac{1-{g}_{k\rho(k)}}{2}+g_{k\tau(k)}x_{\tau(k)}+\frac{1-{g}_{k\tau(k)}}{2}\geq 1,
\end{aligned}
\end{equation*}
which is equivalent to 
%$$\alpha_{\pi(k)}+\beta_{\rho(k)}+\gamma_{\tau(k)}+\frac{3}{2}-\frac{\mathbf{1}^\mathsf{T}\hat{\h}_k}{2}\geq 1,$$
$$2\hat{\mathbf{g}}_k^\mathsf{T}\x-\mathbf{1}^\mathsf{T}\hat{\mathbf{g}}_k+1\geq 0.$$
Therefore, if there exists a truth assignment $x_1,x_2,\dots,x_n$ for the 3-SAT problem, we can simply set $z_i=x_i,~i=1,2,\dots, n$, to obtain a solution of (\ref{nphard2}) with the objective value greater than or equal to $0$. On the other hand, if the optimal value of the constructed problem is greater than or equal to $0$, i.e., there exists $\z\in\{0,1\}^n$ satisfying (\ref{nphard3}), we can simply assign $x_i=z_i,~i=1,2,\dots, n$, to obtain a truth assignment.  Since the transformation is in polynomial time and the 3-SAT problem is strongly NP-hard, we can conclude that problem (P$_0$) is strongly NP-hard. 

It follows immediately from the above proof that there is no polynomial-time constant approximation algorithm for solving (P$_0$).
Otherwise, we can check whether the optimal value of the constructed problem \eqref{construct1} is nonnegative in polynomial time, which in turn solves the corresponding 3-SAT problem in polynomial time. This contradicts with the strong NP-hardness of the 3-SAT problem.

\section{Proof of Theorem \ref{equivalence}}\label{AppendixC}
{\color{black} Notice that (i) in Theorem \ref{equivalence} is a direct corollary of (ii) in Theorem \ref{equivalence}, thus it is sufficient to prove (ii) of Theorem \ref{equivalence}.}

First, given a penalty parameter $\lambda>\max_l\|\mathbf{a}_l\|_{\infty}$ and a local minimizer $\bar{\x}$ of (P$_\lambda$), we will show that $\bar{\x}$ is a feasible point of (P). Suppose for contradiction that there exists $s\in\{1,2,\dots n\}$  such that $|\bar{x}_s|<1$.  We claim that for any $\delta>0$, there exists $\z=[z_1,z_2,\dots,z_n]^\mathsf{T}\in B(\bar{\x},\delta)\cap[-1,1]^n$ such that $$\max_l \mathbf{a}_l^\mathsf{T}\z-\lambda\|\z\|_1<\max_l \mathbf{a}_l^\mathsf{T}\bar{\x}-\lambda\|\bar{\x}\|_1,$$ which contradicts with the fact that $\bar{\x}$ is a local minimizer.
Specifically, let 
\begin{equation*}
z_i=\left\{
\begin{aligned}
&\text{sgn}(\bar{x}_i)\min\{|\bar{x}_i|+\delta,1\}, \quad \hspace{0.1cm}\text{if }i=s;\\
&\quad ~\bar{x}_i,\qquad \qquad\qquad\qquad\hspace{0.3cm}\text{otherwise},
\end{aligned}
\right.\quad~ i=1,2,\dots,n.
\end{equation*}
It is easy to check that $\z\in B(\bar{\x},\delta)\cap[-1,1]^n$ and
$$\|\z\|_1-\|\bar{\x}\|_1=\|\z-\bar{\x}\|_1=\min\{1-|\bar{x}_s|,\delta\},$$
which, together with $\lambda>\max_l\|\mathbf{a}_l\|_\infty$,  implies that
\begin{equation*}
\begin{aligned}
&\max_l\mathbf{a}_l^\mathsf{T}\z-\max_l\mathbf{a}_l^\mathsf{T}\bar{\x}-\lambda(\|\z\|_1-\|\bar{\x}\|_1)\\
\leq&\max_l\|\mathbf{a}_l\|_\infty\|\z-\bar{\x}\|_1-\lambda(\|\z\|_1-\|\bar{\x}\|_1)\\
=&(\max_l\|\mathbf{a}_l\|_\infty-\lambda)\min\{1-|\bar{x}_s|, \delta\}<0.
\end{aligned}
\end{equation*}
Therefore,  we can conclude that any local minimizer of (P$_\lambda$) is a feasible point of (P) when $\lambda>\max_l\|\mathbf{a}_l\|_{\infty}$.

On the other hand, let $\bar{\x}$ be a feasible point of (P), and we will show that for any  $\lambda>\max_l\|\mathbf{a}_l\|_\infty$, it is also a (strict) local minimizer of (P$_\lambda$).  The main idea of the  proof is similar to that in \cite[Theorem 3.5]{penaltyproof}.
 We first define $\x^s,~s=1,2,\dots, 2^n-1$, to be all the remaining feasible points of (P) except $\bar{\x}$ and set $\d^s=\x^s-\bar{\x}$ for all $s$. 
Given $t\in(0,\frac{1}{2})$, let $\x_s=\bar{\x}+t\d^s, s=1,2,\dots, 2^n-1,$ and let $\x_N=\bar{\x}$, where $N=2^n$. We claim that ${\x}_N$ is the local minimizer in $\text{conv}(\x_1,\x_2,\dots,\x_N)$, where $\text{conv}(\cdot)$ denotes the convex hull of the corresponding set.

Note that any given $\x\in \text{conv}(\x_1,\x_2,\dots,\x_N)$  can be expressed as 
$\x=\sum_{s=1}^N\mu_s\x_s$ with some $\boldsymbol{\mu}=[\mu_1,\mu_2,\dots,\mu_N]\in\R^N$ satisfying $\sum_{s=1}^N\mu_s=1$ and $\boldsymbol{\mu}\geq\mathbf{0}.$
It follows that
\begin{equation}\label{17}
\begin{aligned}
\max_l\mathbf{a}_l^\mathsf{T}\x-\max_l\mathbf{a}_l^\mathsf{T}\x_N=&\max_l\mathbf{a}_l^\mathsf{T}\left(\sum_{s=1}^N\mu_s\x_s\right)-\max_l\mathbf{a}_l^\mathsf{T}\x_N\\
\geq &\mathbf{a}_{l_0}^\mathsf{T}\left(\sum_{s=1}^N\mu_s\x_s-\x_N\right)~\quad\qquad\\
\geq &-\|\mathbf{a}_{l_0}\|_\infty\sum_{s=1}^N\mu_s\|\x_s-\x_N\|_1\\
\geq&-\max_l\|\mathbf{a}_l\|_\infty\sum_{s=1}^{N-1}\mu_st\|\d^s\|_1,
\end{aligned}
\end{equation}
where $ l_0\in\arg\max_l\mathbf{a}_l^\mathsf{T}\x_N$.
Furthermore, 
\begin{equation}\label{18}
\begin{aligned}
\|\x_N\|_1-\|\x\|_1&=\|\x_N\|_1-\left\|\sum_{s=1}^N\mu_s\x_s\right\|_1\\
&\geq \|\x_N\|_1-\sum_{s=1}^N\mu_s\|\x_s\|_1\\
&=\sum_{s=1}^{N-1}\mu_s\left(\|\x_N\|_1-\|\x_s\|_1\right).
\end{aligned}
\end{equation}

Combining (\ref{17}) and (\ref{18}) gives
\begin{equation}\label{19}
\begin{aligned}
&\max_l\mathbf{a}_l^\mathsf{T}\x-\lambda\|\x\|_1-\left(\max_l\mathbf{a}_l^\mathsf{T}\x_N-\lambda\|\x_N\|_1\right)\\
\geq &\sum_{s=1}^{N-1}\mu_s\left(\lambda\|\x_N\|_1-\lambda\|\x_s\|_1-\max_l\|\mathbf{a}_l\|_\infty t\|\d^s\|_1\right).
\end{aligned}
\end{equation}
For all $s\in\{1,2,\dots, N-1\}$, we denote
\begin{equation*}
\begin{aligned}
%&\Gamma_1^s=\{i~|~\bar{\x}(i)=1, \x_s(i)=1\},~~\ \Gamma_2^l=\{i~|~\bar{\x}(i)=1, \x_s(i)=-1\},\\
%&\Gamma_3^l=\{i~|~\bar{\x}(i)=-1, \x_s(i)=1\},~\hspace{0.017cm}\Gamma_4^l=\{i~|~\bar{\x}(i)=-1, \x_s(i)=-1\}.
&\Gamma_1^s=\left\{i\mid\bar{\x}(i)=1, \x^s(i)=1\right\},~~\ \Gamma_2^s=\left\{i\mid\bar{\x}(i)=1, \x^s(i)=-1\right\},\\
&\Gamma_3^s=\left\{i\mid\bar{\x}(i)=-1, \x^s(i)=1\right\},~\hspace{-0.02cm} \Gamma_4^s=\left\{i\mid\bar{\x}(i)=-1, \x^s(i)=-1\right\}.
\end{aligned}
\end{equation*}
Since $t\in(0,\frac{1}{2})$, we have 
\begin{equation*}
\begin{aligned}
\|\x_N\|_1-\|\x_s\|_1&=|\Gamma_2^s|(1-|1-2t|)+|\Gamma_3^s|(1-|-1+2t|)
=(|\Gamma_2^s|+|\Gamma_3^s|)2t
\end{aligned}
\end{equation*}
and 
$$\|\d^s\|_1=2(|\Gamma_2^s|+|\Gamma_3^s|),\quad s=1,2,\dots, N-1,$$
which, together with \eqref{19}, implies
\begin{equation*}
\begin{aligned}
&\max_{l}\mathbf{a}_l^\mathsf{T}\x-\lambda\|\x\|_1-(\max_l\mathbf{a}_l^\mathsf{T}\x_N-\lambda\|\x_N\|_1)\geq 2\sum_{s=1}^{N-1}t\mu_s(|\Gamma_2^s|+|\Gamma_3^s|)(\lambda-\max_l\|\mathbf{a}_l\|_\infty)\geq0,
\end{aligned}
\end{equation*}
{\color{black}where the last inequality holds strictly if $\mu_1, \mu_2, \cdots, \mu_{N-1}$ are not all $0$, i.e., $\x\neq \x_N$.}
Therefore, for any $t\in (0,\frac{1}{2}) $ and any $\lambda>\max\|\mathbf{a}_l\|_\infty$, it holds that for all  $\x\in\text{conv}(\x_1,\x_2,\dots, \x_N)$ {\color{black} and $\x\neq \x_N$},
$$\max_l\mathbf{a}_l^\mathsf{T}\x-\lambda\|\x\|_1{\color{black}>}\max_l\mathbf{a}_l^\mathsf{T}\x_N-\lambda\|\x_N\|_1,$$ which proves our claim.
Moreover, we can always choose a sufficiently small but fixed $\epsilon>0$ such that $B(\x_N, \epsilon)\cap [-1,1]^n\subset \text{conv}(\x_1,\x_2,\dots, \x_N)$. Consequently, $\bar{\x}=\x_N$ is a {\color{black}(strict)} local minimizer of (P$_\lambda$), which completes our proof.

\section{Proof of Theorem \ref{convergethe}}\label{AppendixD}
In this section, we give the proof of Theorem \ref{convergethe}. We first give an auxiliary lemma that is important to the proof in Appendix \ref{auxilarylemma} and then give the main proof in Appendix \ref{mainproof}. 
\subsection{An Auxiliary Lemma}\label{auxilarylemma}
\begin{lemma}
Suppose that Assumption \ref{assumption1} holds and assume that $\{c_k\}$ is a nonnegative monotonically decreasing sequence. Let $\{(\x_k,\y_k)\}$ be the sequence generated by Algorithm \ref{algorithm1} with $\rho_k=\rho$. Also  denote  $F_{k+1}=F(\x_{k+1},\y_{k+1})$,
\begin{equation}\label{S}
S_{k+1}=\frac{4}{\rho^2c_{k+1}}\|\y_{k+1}-\y_k\|_2^2-\frac{4}{\rho}\left(\frac{c_{k-1}}{c_k}-1\right)\|\y_{k+1}\|_2^2,
\end{equation}
\begin{equation}\label{Phi}
\Phi_{k+1}=F_{k+1}+S_{k+1}-\frac{7}{2\rho}\|\y_{k+1}-\y_k\|_2^2-\frac{c_k}{2}\|\y_{k+1}\|_2^2.
\end{equation}
If 
\begin{equation}\label{conditionofcrho}
\frac{1}{c_{k+1}}-\frac{1}{c_k}\leq \frac{\rho}{10}, \quad\rho\leq \frac{2}{L_y+2c_1},
\end{equation}
then for all $k\geq 1$, it holds that
\begin{equation*}
\begin{array}{l}
 \begin{aligned}
\Phi_{k+1}-\Phi_{k} \leq &-\left(\frac{\tau_k-L_x}{2}-\frac{{\rho} L_{12}^{2}}{2}-\frac{8 L_{12}^{2}}{{\rho} c_{k}^{2}}\right)\left\|\x_{k+1}-\x_{k}\right\|_2^{2}\\
&-\frac{1}{10 {\rho}}\left\|\y_{k+1}-\y_{k}\right\|_2^{2}+\frac{c_{k-1}-c_{k}}{2}\left\|\y_{k+1}\right\|_2^{2}\\
&+\frac{4}{{\rho}}\left(\frac{c_{k-2}}{c_{k-1}}-\frac{c_{k-1}}{c_{k}}\right)\left\|\y_{k}\right\|_2^{2}.
\end{aligned}
\end{array}
\end{equation*}
\end{lemma}

\begin{proof}
Since $\nabla_{\x} f(\x,\y)$ is Lipschitz continuous for fixed $\y$, we have
\begin{equation*}
\begin{aligned}
f(\x_{k+1},\y_k)-f(\x_k,\y_k)\leq& \left<\nabla_{\x} f(\x_k,\y_k),\x_{k+1}-\x_k\right>+\frac{L_x}{2}\|\x_{k+1}-\x_k\|_2^2.
\end{aligned}
\end{equation*}
By the update rule of $\x$, we further have
\begin{equation*}
\begin{aligned}
\left<\nabla_{\x} f(\x_k,\y_k),\x_{k+1}-\x_k\right>-g(\x_{k+1})+g(\x_k)
\leq-\frac{\tau_k}{2}\|\x_{k+1}-\x_k\|_2^2.
\end{aligned}
\end{equation*}
Combining the above two inequalities yields 
$$ F(\x_{k+1},\y_k)-F(\x_k,\y_k)\leq \frac{L_x-\tau_k}{2}\|\x_{k+1}-\x_k\|_2^2.$$
Then the rest of the proof is the same as in \cite[Lemma 3.6]{xu2020unified}.
\end{proof}
\subsection{Proof of Theorem \ref{convergethe}}\label{mainproof}
Now we are ready to give the proof of Theorem \ref{convergethe}. With the selected parameters in the theorem, it is easy to check that $\rho\leq \frac{2}{L_y+2c_1}$ and $\lim_{k\to\infty}\left(\frac{1}{c_{k+1}}-\frac{1}{c_k}\right)=0,$ and thus there exists $k_0$ such that condition \eqref{conditionofcrho} holds for all $k\geq k_0$.

In the following, we shall first prove that 
$$\tau_k\|\x_{k+1}-\x_k\|_2\rightarrow 0~\text{and}~\|\y_{k+1}-\y_k\|_2\rightarrow 0.$$
Let $\alpha_k=\frac{8(\beta_2-1)L_{12}^2}{\rho c_k^2},$ and then $\tau_k$ can be expressed as   $ \tau_k=\frac{16L_{12}^2}{\rho c_k^2}+2\alpha_k+\beta_3.$ Since $\beta_3\geq L_x+\rho L_{12}^2$, it follows from the above lemma that for all $k\geq k_0$,
\begin{equation}\label{32}
\begin{aligned}
&\alpha_k\|\x_{k+1}-\x_k\|_2^2+\frac{1}{10\rho}\|\y_{k+1}-\y_k\|_2^2\\
\leq \ &\Phi_{k}-\Phi_{k+1}+\frac{4}{{\rho}}\left(\frac{c_{k-2}}{c_{k-1}}-\frac{c_{k-1}}{c_{k}}\right)\left\|\y_{k}\right\|_2^{2}+\frac{c_{k-1}-c_{k}}{2}\left\|\y_{k+1}\right\|_2^{2}.
\end{aligned}
\end{equation}
For all $K>k_0$, summing both sides of (\ref{32}) from $k=k_0$ to $k=K$ gives 
 \begin{equation*}
 \begin{aligned}
 &\sum_{k=k_0}^K\alpha_k\|\x_{k+1}-\x_k\|_2^2+\frac{1}{10\rho}\|\y_{k+1}-\y_k\|_2^2 \\
 \leq\ &\Phi_{k_0}-\Phi_{K+1}+\frac{4}{\rho}\left(\frac{c_{{k_0}-2}}{c_{k_0-1}}-\frac{c_{K-1}}{c_K}\right)\sigma_y^2+\frac{c_{k_0-1}-c_K}{2}\sigma_y^2,\\
 \end{aligned}
 \end{equation*}
 where $\sigma_y=\max\{\|\y\|_2~|~\y\in \mathcal{Y}\}$.
  Furthermore, it follows from the definitions \eqref{S} and \eqref{Phi} that $\Phi_k$ is bounded from below, and thus we have 
 \begin{equation}\label{xto0}
 \sum_{k=1}^{\infty} \alpha_k\|\x_{k+1}-\x_k\|_2^2<+\infty
 \end{equation}
 and
 \begin{equation*}
  \sum_{k=1}^{\infty} \|\y_{k+1}-\y_k\|_2^2<+\infty,
 \end{equation*}
which immediately shows that  $$ \|\y_{k+1}-\y_k\|_2\rightarrow 0.$$
On the other hand, since $\lim_{k\to\infty}c_k=0$ and $\beta_2>1$, we have 
\begin{equation*}
\begin{aligned}
\lim_{k\to\infty}\frac{\tau_k}{\alpha_k}&=\lim_{k\to\infty}\frac{\frac{16\beta_2 L_{12}^2}{\rho c_k^2}+\beta_3}{\frac{8(\beta_2-1)L_{12}^2}{\rho c_k^2}}=\frac{2\beta_2}{\beta_2-1},\\
\end{aligned}
\end{equation*}
which implies that the sequence $\left\{\frac{\tau_k}{\alpha_k}\right\}$ is bounded, i.e., there exists $d_1$ such that 
$$\frac{\tau_k}{\alpha_k}\leq d_1,\quad\forall k\geq 1.$$
Therefore, the following inequality holds for all $k\geq 1$,
\begin{equation*}
\frac{1}{d_1\tau_k}\tau_k^2\|\x_{k+1}-\x_k\|_2^2\leq \alpha_k\|\x_{k+1}-\x_k\|_2^2,
\end{equation*}
which, together with \eqref{xto0}, gives
\begin{equation*}
\sum_{k=1}^\infty\frac{1}{d_1\tau_k}\tau_k^2\|\x_{k+1}-\x_k\|_2^2<+\infty.
\end{equation*}
Note that $d_1\tau_k\leq d_1^2\alpha_k=\mathcal{O}(k^{2\gamma})$ with $0<\gamma\leq 0.5$. Then we know  from  the above assertion that 
\begin{equation}\label{taudiffx}
\tau_k\|\x_{k+1}-\x_k\|_2\rightarrow 0.
\end{equation}

Next we shall show that any limit point of $\{(\x_k,\y_k)\}$ is a stationary point of the corresponding min-max problem. Given a limit point $(\hat{\x},\hat{\y})$ of $\{(\x_k,\y_k)\}$, there exists $\{k_j\}$ such that 
$$\lim_{j\to\infty}(\x_{k_j}, \y_{k_j})=(\hat{\x},\hat{\y}).$$
By the update rules of $\x$ and $\y$ in Algorithm \ref{algorithm1}, it holds that
\begin{subequations}
%\left\{
\begin{align}
\mathbf{0}&\in\nabla_{\x} f(\x_{k_j},\y_{k_j})-\partial g(\x_{k_j+1})+\tau_{k_j}(\x_{k_j+1}-\x_{k_j})+\partial\mathbb{I}_\mathcal{X}(\x_{k_j+1}),\label{xa}\\
\mathbf{0}&\in-\nabla_{\y} f(\x_{k_j+1},\y_{k_j})+\frac{1}{\rho}(\y_{k_j+1}-\y_{k_j})+c_{k_j}\y_{k_j}+\partial\mathbb{I}_{\mathcal{Y}}(\y_{k_j+1}).
\end{align}%\right.$
\end{subequations}
The above (\ref{xa}) can be equivalently expressed as
\begin{equation}\label{limit}
\begin{aligned}
\langle-\nabla_{\x}f(\x_{k_j},\y_{k_j})+\mathbf{s}_{k_{j}+1}-\tau_{k_j}(\x_{k_j+1}-\x_{k_j}),\x-\x_{k_j+1}\rangle\leq0,\quad\forall \x\in\mathcal{X},
\end{aligned}
\end{equation}
where $\mathbf{s}_{k_j+1}$ is an element in $\partial g(\x_{k_j+1})$ that guarantees (\ref{xa}) holds. By Assumption \ref{assumptionong}, we know that the sequence $\{\mathbf{s}_{k_j+1}\}$ is bounded. Without loss of generality, we assume 
$$\lim_{j\to\infty}\mathbf{s}_{k_j+1}=\hat{\mathbf{s}},$$
otherwise we can extract a convergent subsequence.
With \eqref{taudiffx} and notice that $\tau_k\rightarrow \infty$, we have $\x_{k_j+1}\rightarrow \hat{\x}$.
Taking limits of the left hand side of inequality (\ref{limit}) gives 
$$\left<-\nabla_{\x} f(\hat{\x},\hat{\y})+\hat{\mathbf{s}},\x-\hat{\x}\right>\leq 0,\quad \forall \x\in\mathcal{X},$$
which further implies that
\begin{equation}\label{xstationary}
0\in\nabla_{\x} f(\hat{\x},\hat{\y})-\hat{\mathbf{s}}+\partial\mathbf{1}_{\mathcal{X}}(\hat{\x})\subset \nabla_{\x} f(\hat{\x},\hat{\y})-\partial g(\hat{\x})+\partial \mathbf{1}_\mathcal{X}(\hat{\x}).
\end{equation}
The last inclusion holds since $g$ is a proper closed convex function, and thus the graph of $\partial g(\x)$ is closed\cite[Theorem 24.4]{convexanalysis}, i.e., $\mathbf{s}_{k_j+1}\in\partial g(\x_{k_j+1})$ with $\mathbf{s}_{k_j+1}\rightarrow\hat{\mathbf{s}}$ and  $\{\x_{k_j+1}\}\rightarrow \hat{\x}$ can imply $\hat{\mathbf{s}}\in\partial g(\hat{\x})$.

Since $c_k\rightarrow 0$ and $\|\y_{k+1}-\y_k\|_2\rightarrow 0$, similarly we can show that 
\begin{equation}\label{ystationary}
\mathbf{0}\in-\nabla_{\y} f(\hat{\x},\hat{\y})+\partial\mathbb{I}_{\mathcal{Y}}(\hat{\y}).
\end{equation}
Combining (\ref{xstationary}) and (\ref{ystationary}), we can conclude that $(\hat{\x},\hat{\y})$ is a stationary point.

\section{Derivation of Solutions to \eqref{distributedform}}\label{AppendixE}
We first notationally simplify \eqref{distributedform} as
\begin{equation}\label{px}
x^*=\arg\min_{-1\leq x\leq 1}~(x-a)^2+b|x|,
\end{equation}
 where $a=\x_{k}(i)-\frac{\A_i^\mathsf{T}\y_k}{\tau_k}$ and $b=-\frac{2\lambda}{\tau_k}<0$ in our problem.  Next we shall consider the two cases of $a\neq 0$ and $a=0$ separately.

When $ a\neq 0$, it is easy to argue that the optimal solution $x^*$ of \eqref{px} must be of the form $x^*=\text{sgn}(a)r$, where $r$ is some nonnegative number in $[0,1]$,  and thus  the optimization problem in \eqref{px} can be equivalently expressed as %$$\min_{0\leq r\leq 1}(r-|a|)^2+br,$$ or
\vspace{-0.1cm}\begin{equation}\label{px2}
\min_{0\leq r\leq 1}\left(r-\left(|a|-\frac{b}{2}\right)\right)^2.
\end{equation}
 Since in our problem $|a|-\frac{b}{2}>-\frac{b}{2}>0$, the optimal solution of \eqref{px2} is $r^*=\min\left\{|a|-\frac{b}{2},1\right\}$, and thus the optimal solution of the original problem \eqref{px} is $x^*=\text{sgn}(a)\min\left\{|a|-\frac{b}{2},1\right\}$.

 When $a=0$, the optimization problem in \eqref{px} becomes
 $\min_{|x|\leq 1}|x|^2+b|x|,$ whose optimal solution is  $|x^*|=\min\left\{-\frac{b}{2},1\right\}$, or equivalently, $x^*\in\left\{\min\left\{-\frac{b}{2},1\right\},-\min\left\{-\frac{b}{2},1\right\}\right\}.$

 Combining the above discussions, we can conclude that the optimal solution of \eqref{px} is given by
 \begin{equation*}
x^*=\left\{
\begin{aligned}
&\text{sgn}(a)\min\left\{|a|-\frac{b}{2},1\right\},~~~\text{if}~a\neq 0;\\
&\pm \min\left\{-\frac{b}{2},1\right\},\hspace{1.35cm}~~~\text{if}~a=0.
\end{aligned}\right.
\end{equation*}

 \section{Proof of Lemma \ref{criticalthe}}\label{AppendixF}
Given a stationary point $\hat{\x}$ of (P$_\lambda$), there exists $\u,\v\in\R_+^n$ such that 
\begin{subequations}
\begin{align}
&\mathbf{0}\in\partial(\max_l\mathbf{a}_l^\mathsf{T}\hat{\x})-\lambda\partial\|\hat{\x}\|_1-\u+\v,\label{kkt1}\\
&u_i(\hat{x}_i+1)=0,~v_i(\hat{x}_i-1)=0,~i=1,2,\dots,n.\label{kkt2}
\end{align}
\end{subequations}
Next we shall show that if $\lambda>\max_l\|\mathbf{a}_l\|_\infty$, $\hat{\x}$ must satisfy $|\hat{x}_i|\in\{0,1\}$ for all $i=1,2, \dots,n.$ Suppose for contradiction that there exists $s$, such that $0<|\hat{x}_s|<1$. It follows immediately from \eqref{kkt2} that $u_s=v_s=0$. Moreover, from the calculation rule of the subdifferential\cite{convexanalysis}, we know that $(\partial\|\hat{\x}\|_1)_s=\text{sgn}(\hat{x}_s)$ and 
$$\partial (\max_l\mathbf{a}_l^\mathsf{T}\hat{\x})=\left\{\A^\mathsf{T} \mathbf{t}\mid \mathbf{t}\in\Delta, 
 t_i=0~\text{if}~ i\notin \mathcal{I}\right\},$$ where $\Delta=\{\mathbf{t}\in\mathbb{R}^m\mid\mathbf{1}^\mathsf{T}\mathbf{t}=1,~\mathbf{t}\geq\mathbf{0}\}$ 
 and $\mathcal{I}$ is defined as 
$$\mathcal{I}=\left\{i\in\{1,2,\dots,m\}\mid\mathbf{a}_i^\mathsf{T}\hat{\x}=\max_l\mathbf{a}_l^\mathsf{T}\hat{\x}\right\}.$$
 Therefore, for all $\s\in \partial (\max_l\mathbf{a}_l^\mathsf{T}\hat{\x})$,  we have 
\begin{equation*}
\begin{aligned}
\|\s\|_\infty=\|\A^\mathsf{T}\mathbf{t}\|_\infty&=\max_l\left|\mathbf{a}_l^\mathsf{T}\mathbf{t}\right|\\
&\leq \max_l\|\mathbf{a}_l\|_\infty\|\mathbf{t}\|_1\\
&= \max_l\|\mathbf{a}_l\|_\infty<\lambda,
\end{aligned}
\end{equation*}
which implies that the condition (\ref{kkt1}) cannot be satisfied for the $s$-th component. As a result, $\hat{\x}$ must have all its elements being either $\pm1$ or $0$.

\section{Proof of Theorem \ref{c1}}\label{AppendixG}
From the closed-form solution \eqref{solutionx}, we know that if $\lambda>\max_{l\in\{1,2,\dots,m\}}\|\mathbf{a}_l\|_\infty$, then for all $i\in\{1,2,\dots,n\}$,
\begin{equation*}
\begin{aligned}
\left|\x_{k+1}(i)\right|=\min\left\{\left|\x_k(i)-\frac{\A_i^\mathsf{T}\y_k}{\tau_k}\right|+\frac{\lambda}{\tau_k},1\right\}&\geq \min\left\{\left|\x_k(i)\right|+\frac{\lambda-\left|\A_i^\mathsf{T}\y_k\right|}{\tau_k},1\right\}\\&\geq\min\left\{\left|\x_k(i)\right|,1\right\},
\end{aligned}
\end{equation*}
where the last inequality holds since
\begin{equation*}
\begin{aligned}
\left|\A_i^\mathsf{T}\y_k\right|\leq \|\A_i\|_\infty\|\y_k\|_1&\leq \max_{i\in\{1,2,\dots,n\}}\|\A_i\|_\infty=\max_{l\in\{1,2,\dots,m\}}\|\mathbf{a}_l\|_\infty<\lambda.
\end{aligned}
\end{equation*}
Therefore, for all $k\geq 1$ and $i\in\{1,2,\dots, n\}$, we have
\begin{equation}\label{xki}
\begin{aligned}
|\x_k(i)|&\geq\min\left\{\left|\x_1(i)\right|,1\right\}\geq \min\left\{\frac{\lambda}{\tau_0},1\right\},
\end{aligned}
\end{equation}
where the last inequality holds since $$|\x_1(i)|=\min\left\{\left|\x_0(i)-\frac{\A_i^\mathsf{T}\y_0}{\tau_0}\right|+\frac{\lambda}{\tau_0},1\right\}\geq \min\left\{\frac{\lambda}{\tau_0},1\right\}.$$
It follows from \eqref{xki}  that the sequence $\{|\x_k(i)|\}_k$ is bounded away from zero.
Let $\hat{\x}$ be any limit point of $\{\x_k\}$. With the selected parameters and  according to Theorem \ref{convergethe}, $\hat{\x}$ is a stationary point  of (P$_\lambda$). Lemma \ref{criticalthe} further implies that each element of $\hat{\x}$ is either $\pm 1$ or $0$. Since $\{|\x_k(i)|\}_k$ is bounded away from zero,   the elements of $\hat{\x}$ can only be $\pm1$.

% use section* for acknowledgment
\section*{Acknowledgment}

The authors would like to thank Dr. Ang Li from Xi'an Jiaotong University for his kind help on numerical simulations {\color{black} and the anonymous reviewers for their helpful comments on the paper.}

% Can use something like this to put references on a page
% by themselves when using endfloat and the captionsoff option.
\ifCLASSOPTIONcaptionsoff
\newpage
\fi

% Generated by IEEEtran.bst, version: 1.14 (2015/08/26)

% trigger a \newpage just before the given reference
% number - used to balance the columns on the last page
% adjust value as needed - may need to be readjusted if
% the document is modified later
%\IEEEtriggeratref{8}
% The "triggered" command can be changed if desired:
%\IEEEtriggercmd{\enlargethispage{-5in}}

% references section

% can use a bibliography generated by BibTeX as a .bbl file
% BibTeX documentation can be easily obtained at:
% http://mirror.ctan.org/biblio/bibtex/contrib/doc/
% The IEEEtran BibTeX style support page is at:
% http://www.michaelshell.org/tex/ieeetran/bibtex/
%\bibliographystyle{IEEEtran}
% argument is your BibTeX string definitions and bibliography database(s)
%\bibliography{IEEEabrv,../bib/paper}
%
% <OR> manually copy in the resultant .bbl file
% set second argument of \begin to the number of references
% (used to reserve space for the reference number labels box)
\vspace{-0.2cm}
\end{document}